\newtheorem{theorem}{Theorem}
\newtheorem{lemma}[theorem]{Lemma}
\newtheorem{corollary}[theorem]{Corollary}
\newtheorem*{definition*}{Definition}
\let\P\undef
\DeclareMathOperator{\E}{E}
\DeclareMathOperator{\P}{P}
\DeclareMathOperator{\I}{I}
\DeclareMathOperator{\Var}{Var}
\DeclareMathOperator{\esssup}{ess\,sup}
\DeclareMathOperator{\sgn}{sgn}
\DeclareMathOperator{\med}{med}
\DeclareMathOperator{\msr}{\mathbb{S}}
\DeclareMathOperator{\msrp}{\msr_p}
\DeclareMathOperator{\cvar}{CVAR}
\newcommand{\R}{\mathbb{R}}
\newcommand{\geqs}{{\succcurlyeq}}
\newcommand{\leqs}{{\preccurlyeq}}
\newcommand{\PP}{\mathbb{P}}
\newcommand{\QQ}{\mathbb{Q}}
\renewcommand{\Re}{\overline{\mathbb{R}}}
\newcommand{\Z}{\mathcal{Z}}
\renewcommand{\sp}[2]{{#1}\cdot {#2}}
\renewcommand{\L}{\mathrm{L}}
\newcommand{\Lp}{{\L^p}}
\newcommand{\Min}{\wedge}
\renewcommand{\sp}[2]{{\langle #1, #2 \rangle}}
\newcommand{\as}{\text{a.s.}}
\newcommand{\tsum}{{\textstyle\sum}}
\renewcommand{\tilde}{\widetilde}
\renewcommand{\epsilon}{\varepsilon}
\title{Monotone Sharpe ratios and related measures of investment
performance}
\author{Mikhail Zhitlukhin\thanks{Steklov Mathematical Institute of the
Russian Academy of Sciences, 8 Gubkina st., Moscow, Russia. E-mail:
mikhailzh@mi-ras.ru. The research was supported by the Russian Science
Foundation, project 18-71-10097.}}
\date{}
\begin{document}
\maketitle
\begin{abstract}
We introduce a new measure of performance of investment strategies, the monotone Sharpe
ratio. We study its properties, establish a connection with coherent risk
measures, and obtain an efficient representation for using in applications.
\end{abstract}

\section{Introduction}
This paper concerns the problem of evaluation of performance of investment
strategies. By performance, in a broad sense, we mean a numerical quantity
which characterizes how good the return rate of a strategy is, so that an
investor typically wants to find a strategy with high performance.

Apparently, the most well-known performance measure is the Sharpe ratio, the
ratio of the expectation of a future return, adjusted by a risk-free rate or another
benchmark, to its standard deviation. It was introduced by William F. Sharpe in
the 1966 paper \cite{Sharpe66}, a more modern look can be also found in
\cite{Sharpe94}. The Sharpe ratio is based on the Markowitz
mean-variance paradigm \cite{Markowitz52}, which assumes that investors
need to care only about the mean rate of return of assets and the variance of 
the rate of return: then in order to find an investment strategy
with the smallest risk (identified with the variance of return) for a given
desired expected return, one just needs  to find a strategy with the
best Sharpe ratio and  diversify appropriately between this strategy and
the risk-free asset (see a brief review in Section 2 below). Despite its simplicity,
as viewed from today's economic science, the Markowitz portfolio theory was a major
breakthrough in mathematical finance. Even today, more than 65 years later, analysts still routinely compute Sharpe ratios of
investment portfolios and use it, among other tools, to evaluate performance.

In the present paper we look at this theory in a new way, and establish
connections with much more recent developments. The main part of the
material of the paper developed from a well-known observation that variance
is not identical to risk: roughly speaking, one has to distinguish between
``variance above mean'' (which is good) and ``variance below mean'' (which
is bad). In particular, the Sharpe ratio lacks the property of monotonicity,
i.e. there might exist an investment strategy which always yields a return
higher than another strategy, but has a smaller Sharpe ratio. The original
goal of this work was to study a modification of the Sharpe ratio, which
makes it monotone. Some preliminary results were presented in
\cite{Z16,Z17}. It turned out, that the modified Sharpe ratio posses
interesting properties and is tightly connected to the theory of risk
measures. The study of them is the subject of this paper.

The modification of the Sharpe ratio we consider, which we call the monotone
Sharpe ratio, is defined as the maximum of the Sharpe ratios of all probability
distributions that are dominated by the distribution of the return of some given
investment strategy.
In this paper we work only with \emph{ex ante} performance measure, i.e.
assume that probability distributions of returns
are known or can be modeled, and one needs to evaluate their performance; we
leave aside the question how to construct appropriate models and
calibrate them from data. 

The theory we develop focuses on two aspects: on one hand, to place the new
performance measure on a modern theoretical foundation, and, on the other
hand, take into account issues arising in applications, like a possibility
of fast computation and good properties of numerical results. Regarding the
former aspect, we can mention the paper of Cherny and Madan
\cite{ChernyMadan08}, who studied performance measures by an axiomatic
approach. The abstract theory of performance measures they proposed is
tightly related to the theory of convex and coherent risk measures, which
has been a major breakthrough in the mathematical finance in the past two
decades. We show that the monotone Sharpe ratio satisfies those axioms,
which allows to apply results from the risk measures theory to it through
the framework of Cherny and Madan. Also we establish a connection with more
recently developed objects, the so-called buffered probabilities, first
introduced by Rockafellar and Royset in \cite{RockafellarRoyset10} and now
gaining popularity in applications involving optimization under uncertainty.
Roughly speaking, they are ``nice'' alternatives to optimization criteria
involving probabilities of adverse events, and lead to solutions of
optimization problems which have better mathematical properties compared to
those when standard probabilities are used. One of main implications of our
results is that the portfolio selection problem with the monotone Sharpe
ratio is equivalent to minimization of the buffered probability of loss.

Addressing the second aspect mentioned above, our main result here is a representation of the monotone Sharpe ratio as a
solution of some convex optimization problem, which gives
a computationally efficient way to evaluate it. Representations of various
functionals in such a way are well-known in the literature on convex
optimization. For example, in the context of finance, we can mention the famous result of
Rockafellar and Uryasev \cite{RockafellarUryasev00} about the representation
of the conditional value at risk. That paper also provides a good explanation
why such a representation is useful in applications (we also give a brief
account on that below). 

Our representation also turns out to be useful in stochastic control
problems related to maximization of the Sharpe ratio in dynamic trading.
Those problems are known in the literature as examples of stochastic control
problems where the Bellman optimality principle cannot be directly applied.
With our theory, we are able to find the optimal strategies in a shorter and
simpler way, compared to the results previously known in the literature.

Finally, we would like to mention, that in the literature a large number of
performance measures have been studied. See for example papers \cite{LeSourd07,CogneauHubner09a,CogneauHubner09b}
providing more than a hundred examples of them addressing various
aspects of evaluation of quality of investment strategies. We believe that
due to both the theoretical foundation and the convenience for applications, the
monotone Sharpe ratio is a valuable contribution to the field.

The paper is organized as follows. In Section 2 we introduce the monotone
Sharpe ratio and study its basic properties which make it a reasonable
performance measure. There we also prove one
of the central results, the representation as a solution of a
convex optimization problem. In Section 3, we generalize the concept of the
buffered probability and establish a
connection with the monotone Sharpe ratio, as well as show how it can be used
in portfolio selection problems. Section 4 contains applications to dynamic
problems.

\section{The monotone Sharpe ratio}
\subsection{Introduction: Markowitz portfolio optimization and the Sharpe
ratio}
\label{msr intro}
Consider a one-period market model, where an investor wants to distribute
her initial capital between $n+1$ assets: one riskless asset and $n$ risky
assets. Assume that the risky assets yield return $R_i$, $i=1,\ldots,n$, so
that \$1 invested ``today'' in asset $i$ turns into \$$(1+R_i)$
``tomorrow''; the rates of return $R_i$ are random variables with known
distributions, such that $R_i>-1$ with probability 1 (no bankrupts happen).
The rate of return of the riskless asset is constant, $R_0=r>-1$. We always
assume that the probability distributions of $R_i$ are known and given, and,
for example, do not consider the question how to estimate them from past
data. In other words, we always work with \textit{ex ante} performance measures
(see \cite{Sharpe94}).

An investment portfolio of the investor is identified with a vector $x\in
\R^{n+1}$, where $x_i$ is the proportion of the initial capital invested in
asset $i$. In particular, $\sum_i x_i = 1$. Some coordinates $x_i$ may be
negative, which is interpreted as short sales ($i=1,\ldots,n$) or loans
($i=0$). It is easy to see that the total return of the portfolio is $R_x =
\sp xR := \sum_i x_iR_i$.

The Markowitz model prescribes the investor to choose the optimal investment
portfolio in the following way: she should decide what expected return $\E
R_x$ she wants to achieve, and then find the portfolio $x$ which minimizes
the variance of the return $\Var R_x$. This leads to the quadratic
optimization problem:
\begin{equation}
\begin{aligned}
\text{minimize}\quad& \Var R_x\ \text{over}\ x\in \R^{n+1}\\
\text{subject to}\quad& \E R_x = \mu\\
&\tsum_i x_i = 1.
\end{aligned}
\label{markowitz}
\end{equation}
Under mild conditions on the joint distribution of $R_i$, there exists a
unique solution $x^*$, which can be easily written explicitly in terms of
the covariance matrix and the vector of expected returns of $R_i$ (the
formula can be found in any textbook on the subject, see, for example,
Chapter~2.4~in~\cite{Pliska97}).

It turns out that points $(\sigma_{x^*}, \mu_{x^*})$, where $\sigma_{x^*} =
\sqrt{\Var R_{x^*}}$, $\mu_{x_*} = \E R_{x^*}$ correspond to the optimal
portfolios for all possible expected returns $\mu \in [r,\infty)$, lie on
the straight line in the plane $(\sigma, \mu)$, called the efficient
frontier. This is the set of portfolios the investor should choose from --
any portfolio below this line is inferior to some efficient portfolio (i.e.
has the same expected return but larger variance), and there are no
portfolios above the efficient frontier.

The slope of the efficient frontier is equal to the Sharpe ratio of any
efficient portfolio containing a non-zero amount of risky assets (those
portfolios have the same Sharpe ratio). Recall that the Sharpe ratio of return $R$ is
defined as the ratio of the expected return adjusted by the risk-free rate
to its standard deviation
\[
S(R) = \frac{\E (R-r)}{\sqrt{\Var{R}}}.
\]
In particular, to solve problem~\eqref{markowitz}, it is enough to find some
efficient portfolio $\hat x$, and then
any other efficient portfolio can be constructed by a combination of the
riskless portfolio $x_0 =(1,0,\ldots,0)$ and $\hat x$, i.e $x^* = (1-\lambda) x_0 +
\lambda\hat x$, where $\lambda\in[0,+\infty)$ is chosen to satisfy $\E
R_{x^*} = \mu\ge r$.
This is basically the statement of the Mutual Fund Theorem. Thus, the Sharpe
ratio can be considered as a measure of  performance of an investment
portfolio and an investor is interested in finding a portfolio with the
highest performance.  In practice, broad market indices can be
considered as quite close to efficient portfolios.

The main part of the material in this paper grew from the observation that
the Sharpe ratio is not monotone: for two random variables $X,Y$ the
inequality $X\le Y$ a.s. does not imply the same inequality between
their Sharpe ratios, i.e. that $S(X) \le S(Y)$. Here is an example: let $X$
have the normal distribution with mean 1 and variance 1 and $Y= X \Min 1$;
obviously, $S(X)=1$ but one can compute that $S(Y)>1$.
From the point of view of the portfolio selection problem, this fact means
that it is possible to increase the Sharpe ratio by
disposing part of the return (or consuming it). This doesn't agree well with
the common sense interpretation of efficiency. Therefore, one may want to
look for a replacement of the Sharpe ratio, which will not have such a
non-natural property.

In this paper we'll use the following simple idea: if it is possible to
increase the Sharpe ratio by disposing a part of the return, let's define
the new performance measure as the maximum Sharpe ratio that can be achieve
by such a disposal. Namely, define the new functional by
\[
\msr(X) = \sup_{C\ge 0} S(X-C)
\]
where the supremum is over all non-negative random variables $C$ (defined on the same
probability space as $X$), which represent the disposed return.
In the rest of this section, we'll study such functionals and how they
can be used in portfolio selection problems. We'll work in a more general
setting and consider not only the ratio of expected return to standard
deviation of return but also ratios of expected return to deviations in $\L^p$. The
corresponding definitions will be given below.

\subsection{The definition of the monotone Sharpe ratio and its representation}
In this section we'll treat random variables as returns of some investment
strategies, unless other is stated. That is, large values are good, small
values are bad. Without loss of generality, we'll assume that the risk-free
rate is zero, otherwise one can replace a return $X$ with $X-r$, and all the
results will remain valid.

First we give the definition of a deviation measure in $\L^p$,
$p\in[1,\infty)$, which will be used in the denominator of the Sharpe ratio
instead of the standard deviation (the latter one is a particular case for
$p=2$). Everywhere below, $\|\cdot\|_p$ denotes the norm in $\Lp$, i.e.
$\|X\|_p = (\E|X|^p)^{\frac1p}$.

\begin{definition*}
We define the $\Lp$-deviation of a random variable $X\in \L^p$ as
\[
\sigma_p(X) = \min_{c\in \R} \|X - c\|_p.
\]
\end{definition*}
In the particular case $p=2$, as is well-known, $\sigma_2(X)$ is the
standard deviation, and the minimizer is $c^* = \E X$. For $p=1$, the
minimizer $c^*=\med(X)$, the median of the distribution of $X$, so that
$\sigma_1(X)$ is the absolute deviation from the median. It is possible to
use other deviation measures to define the monotone Sharpe ratio, for
example $\|X-\E X\|_p$, but the definition given above seems to be the most
convenient for our purposes.

Observe that $\sigma_p$ obviously satisfies the following properties, which
will be used later: (a) it is sublinear; (b) it is uniformly continuous on $\L^p$; (c)
$\sigma_p(X) = 0$ if and only if $X$ is a constant a.s.; (d) for any
$\sigma$-algebra $\mathcal{G}\subset\mathcal{F}$, where $\mathcal{F}$ is the
original $\sigma$-algebra on the underlying probability space for $X$, we
have $\sigma_p(\E(X\mid\mathcal{G})) \le \sigma_p(X)$; (e) if $X$ and $Y$
have the same distributions, then $\sigma_p(X) = \sigma_p(Y)$.

\begin{definition*}
The \emph{monotone Sharpe ratio in $\Lp$} of a random variable $X\in\Lp$ is defined by
\begin{equation}
\msrp(X) = \sup_{Y\le X} \frac{\E Y}{\sigma_p(Y)},
\label{sp def}
\end{equation}
where the supremum is over all $Y\in \L^p$ such that $Y\le X$ a.s. For $X=0$ a.s. we set by
definition $\msrp(0) = 0$.
\end{definition*}

One can easily see that if $p>1$, then $\msrp(X)$ assumes value in $[0,\infty]$. Indeed, if
$\E X \le 0$, then $\msrp(X) = 0$ as it is possible to take $Y \le X$ with
arbitrarily large $\Lp$-deviation keeping $\E Y$ bounded. On the other hand, if $X\ge0$ a.s. and
$\P(X>0)\neq 0$, then $\msrp(X) = +\infty$ as one can consider $Y_\epsilon =
\epsilon \I(X\ge \epsilon)$ with $\epsilon\to0$ for which $\E
Y_\epsilon/\sigma_p(Y_\epsilon) \to \infty$. 

Thus, the main case of interest will be when $\E X > 0$ and $\P(X<0)\neq0$;
then $0 < \msrp(X) < \infty$. For this case, the following theorem
provides the representation of $\msrp$ as a solution of some
convex optimization problem. 

\begin{theorem}
\label{th msr repr}
Suppose  $X\in \Lp$ and $\E(X) > 0$, $\P(X<0)\neq0$. Then the following
representations of the monotone Sharpe ratio are valid.

\noindent
1) For $p\in(1,\infty)$ with    $q$ such that $\frac1p + \frac1q = 1$:
\begin{equation}
(\msr_p(X))^q = \max_{a,b\in\mathbb{R}}
\Bigl\{ b - \E\Bigl(\tfrac{q-1}{q^p} \bigl|(aX+b)_+-q\bigr|^p + (aX+b)_+
\Bigr)  \Bigr\}.
\label{msr repr}
\end{equation}

\noindent
2) For $p=1,2$:
\begin{equation}
\frac{1}{1+(\msrp(X,r))^p} = \min_{c\in\R} \E(1-cX)_+^p.
\label{msr12 repr}
\end{equation}
\end{theorem}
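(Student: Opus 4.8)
The plan is to raise $\msr_p(X)$ to the power $q$ and convert the ratio into a jointly concave optimization that can be maximized pointwise in $\omega$. First I would note that since $\E X>0$ the supremum defining $\msr_p(X)$ is positive and is unchanged if we restrict to $Y\le X$ with $\E Y>0$ (any $Y$ with $\E Y\le 0$ gives a nonpositive ratio). On this set, using $\tfrac1{\sigma_p(Y)}=\max_{c\in\R}\tfrac1{\|Y-c\|_p}$ and positive homogeneity, I would write $(\msr_p(X))^q=\sup_{Y\le X,\,c}\bigl((\E Y)^q/\|Y-c\|_p^q\bigr)$. The linearization step is the elementary identity: for $M,s>0$,
\[
\frac{M^q}{s^q}=\sup_{a\ge 0}\Bigl(qaM-(q-1)a^{p}s^{p}\Bigr),
\]
with the supremum attained at the $a$ for which $a^{p-1}=M/s^{p}$ (here $p(q-1)=q$ is used). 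Taking $M=\E Y$, $s=\|Y-c\|_p$ and interchanging the suprema (justified by monotone convergence or a minimax argument) gives
\[
(\msr_p(X))^q=\sup_{a\ge 0,\;c\in\R}\ \sup_{Y\le X}\ \E\Bigl(qaY-(q-1)a^{p}\,|Y-c|^{p}\Bigr).
\]

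The heart of the proof is the inner supremum, now a pointwise concave maximization: for each $\omega$ one maximizes $y\mapsto qay-(q-1)a^{p}|y-c|^{p}$ over $y\le X(\omega)$. The unconstrained maximizer is $y=c+1/a$, so the optimal admissible choice is the clipped variable $Y^*=\min(X,\,c+1/a)$, and a direct computation gives the pointwise value $qac+1$ where the constraint is slack and $qaX-(q-1)a^{p}|X-c|^{p}$ where it binds. Substituting this back, the remaining problem is an honest optimization over the two scalars $(a,c)$. I would then perform the change of variables $(a,c)\mapsto(\tilde a,b)=(-qa,\ q+qac)$, under which $a(X-c)=-(\tilde aX+b)/q+1$ and the clipping threshold $X\ge c+1/a$ becomes $\tilde aX+b\le 0$; a short algebraic check shows the integrand becomes exactly $\tfrac{q-1}{q^{p}}\big|(\tilde aX+b)_+-q\big|^{p}+(\tilde aX+b)_+$ up to an additive constant absorbed into the free term $b$. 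This yields representation (1) with the supremum taken over $\tilde a\le 0$.

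It remains to turn this supremum into the stated maximum over all $(\tilde a,b)\in\R^2$ and to deduce part 2. Weak duality — Young's inequality applied termwise — shows the bracketed expression is $\le(\msr_p(X))^q$ for every $(\tilde a,b)$, while concavity of the objective together with the fact that the optimal $a$ is strictly positive (the value at $a=0$ is $0<(\msr_p(X))^q$) guarantees attainment at an interior point with $\tilde a<0$; hence the constrained supremum equals the global maximum. For part 2 with $p=2$ I would eliminate one parameter from (1) by homogeneity: writing $(\tilde a,b)=r(\alpha,\beta)$ and maximizing over $r$ collapses the two-parameter formula to $1+(\msr_2(X))^2=1/\min_c\E(1-cX)_+^2$, which is (2). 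The case $p=1$ lies outside the range of (1), and I would treat it directly but in parallel, using the $L^1$-specific structure of $\sigma_1$ (absolute deviation from the median): reduce $\tfrac1{1+\msr_1(X)}$ to a one-parameter convex minimization via the same pointwise-minimization device over $Y\le X$, which again manufactures the positive part and produces $\E(1-cX)_+$, and use convexity in $c$ to restrict to $c\ge 0$ and secure the minimum.

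The step I expect to be the \emph{main obstacle} is the pointwise maximization together with its bookkeeping: verifying that the clipped optimizer $Y^*=\min(X,c+1/a)$ is measurable and lies in $\Lp$, justifying the interchange of the suprema, and above all tracking the sign conditions ($\E Y>0$, $a>0$, $c>0$) that are needed both to make the Young identity an equality and to guarantee that the extremal $Y^*$ has positive mean. A naive argument that ignores these signs produces spurious degenerate maximizers (for instance large negative constants, for which $\sigma_p=0$ in the original ratio but which nonetheless drive the unsigned relaxation to a wrong value), so the careful sign analysis is where the real work lies.
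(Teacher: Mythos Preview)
Your approach for $p>1$ is correct and genuinely different from the paper's. The paper proceeds entirely through convex duality: it first establishes $\sigma_p(X)=\max\{\E(RX):\E R=0,\ \|R\|_q\le1\}$, then shows via a strong-duality argument (with the optimal-value-function continuity check) that $1/\msrp(X)$ equals a constrained infimum over $R\in \L^q$, and finally dualizes once more---using Banach--Alaoglu and Sion's minimax theorem to justify the swaps---to reach formula~\eqref{msr repr}. Your route stays on the primal side: the Young-type linearization $M^q/s^q=\sup_{a\ge0}\bigl(qaM-(q-1)a^ps^p\bigr)$ turns the ratio into the expectation of a function concave in $Y$, and the explicit pointwise maximizer $Y^*=\min(X,\,c+1/a)$ does the rest. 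Your method is more elementary (no infinite-dimensional duality, no minimax) and has the bonus of exhibiting the optimal $Y$ explicitly as a truncation of $X$; the paper's method, in exchange, yields the intermediate representation $\msrp(X)=\inf\{\|R\|_q:R\le1,\ \E R=0,\ \E(RX)=1\}$, which it uses elsewhere. One small point: your claim that concavity of the objective together with $\Phi(0,\cdot)\le0<(\msrp(X))^q$ forces attainment at an interior $\tilde a<0$ is not quite complete---a concave function on a half-plane can approach its supremum at infinity---so you should add a coercivity check, or note that the integrand is a convex $C^1$ function of $\tilde aX+b$ and verify $\Phi\to-\infty$ along rays. Your reduction from \eqref{msr repr} to \eqref{msr12 repr} for $p=2$ is correct and matches the paper's.

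The real gap is $p=1$. Your sketch there (``the same pointwise-minimization device \ldots\ manufactures the positive part'') is too vague, and I do not see how to make it work: the Young linearization degenerates at $q=\infty$, and there is no evident substitute that renders $\E Y/\E|Y-c|$ as the expectation of a pointwise-concave function of $Y$ while producing $(1-cX)_+$. The paper's proof for $p=1$ is in fact a separate and substantial argument: it returns to the dual form $\msr_1(X)=\inf\{\|R\|_\infty:R\le1,\ \E R=0,\ \E(RX)=1\}$, dualizes again, and then explicitly solves for the optimal multiplier $u^*$ under the constraint $\E|u+v+wX|\le1$ before collapsing the problem to one scalar variable. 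You should expect the $p=1$ case to require its own idea rather than a parallel of your $p>1$ argument; this, not the sign bookkeeping you flag, is the main obstacle.
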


The main point about this theorem is that it
allows to reduce the problem of computing $\msrp$ as the supremum
over the set of random variables to the optimization  problem with one
or two real parameters and the  convex objective function. The
latter problem is much easier than the former one, since there exist
efficient algorithms of numerical convex optimization. This gives a
convenient way to compute $\msrp(X)$ (though only numerically, unlike the
standard Sharpe ratio). We'll also see that the representation is useful for
establishing some theoretical results about $\msrp$.

For the proof, we need the following auxiliary lemma.
\begin{lemma}
\label{sp repr}
Suppose $X\in\Lp$, $p\in[1,\infty)$, and $q$ is such that $\frac
1p + \frac 1q = 1$. Then
\[
\sigma_p(X) = \max\{\E (RX) \mid R\in \L^q,\; \E R=0, \|R\|_q \le 1 \}.
\]
\end{lemma}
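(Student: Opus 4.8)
The plan is to prove the two inequalities separately, using the definition $\sigma_p(X)=\min_{c\in\R}\|X-c\|_p$ together with the duality between the norms of $\Lp$ and $\L^q$. Recall that for any $Z\in\Lp$ one has $\|Z\|_p=\max\{\E(RZ)\mid R\in\L^q,\ \|R\|_q\le1\}$, the maximum being attained; for $p>1$ the maximizer is $R=\sgn(Z)\,|Z|^{p-1}/\|Z\|_p^{p-1}$, while for $p=1$ one may take any measurable $R$ with $R=\sgn(Z)$ on $\{Z\neq0\}$ and $|R|\le1$ elsewhere. I would keep the mean-zero constraint on $R$ in play throughout, since that is what couples the minimization over $c$ to the dual problem.

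First I would show that the right-hand side cannot exceed $\sigma_p(X)$. For any admissible $R$ (that is, $\E R=0$ and $\|R\|_q\le1$) and any $c\in\R$, the condition $\E R=0$ gives $\E(RX)=\E(R(X-c))$, and Hölder's inequality yields $\E(R(X-c))\le\|R\|_q\|X-c\|_p\le\|X-c\|_p$. Minimizing the right-hand side over $c$ gives $\E(RX)\le\sigma_p(X)$, so the supremum on the right is at most $\sigma_p(X)$.

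For the reverse inequality and for attainment, let $c^*$ minimize $c\mapsto\|X-c\|_p$ (a minimizer exists since this function is convex, continuous and coercive), and set $Z=X-c^*$, so $\|Z\|_p=\sigma_p(X)$. Choose a dual witness $R^*$ with $\|R^*\|_q\le1$ and $\E(R^*Z)=\|Z\|_p=\sigma_p(X)$ as in the norm duality above. The crux — and the main obstacle — is to arrange $\E R^*=0$, which is exactly the first-order optimality condition for $c^*$; once this holds, $R^*$ is admissible and $\E(R^*X)=\E(R^*Z)+c^*\E R^*=\sigma_p(X)$, proving the claim and showing the maximum is attained. For $p\in(1,\infty)$ the map $c\mapsto\E|X-c|^p$ is differentiable with derivative $-p\,\E(\sgn(X-c)|X-c|^{p-1})$, so optimality of $c^*$ gives $\E(\sgn(Z)|Z|^{p-1})=0$, which is precisely $\E R^*=0$ for the explicit maximizer. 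For $p=1$ the denominator is non-smooth and $c^*=\med(X)$ need not be unique: here $R^*$ is pinned to $\sgn(Z)$ only off the atom $\{X=c^*\}$, and the freedom on that atom must be spent to force $\E R^*=0$. Writing $a=\P(X<c^*)$, $b=\P(X=c^*)$, $d=\P(X>c^*)$, the median property yields $a\le\tfrac12$ and $d\le\tfrac12$, hence $|a-d|\le b$, so setting $R^*$ equal to the constant $(a-d)/b\in[-1,1]$ on $\{X=c^*\}$ (and $\sgn(Z)$ elsewhere) gives $\E R^*=0$ while $\|R^*\|_\infty\le1$.

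As an alternative that sidesteps the case analysis, one could write $\sigma_p(X)=\min_c\max_{\|R\|_q\le1}\E(R(X-c))$ and invoke Sion's minimax theorem: the ball $\{\|R\|_q\le1\}$ is convex and weakly (for $p>1$) or weak-$*$ (for $p=1$) compact, the payoff is bilinear, and after interchanging $\min$ and $\max$ the inner minimization of $\E(RX)-c\,\E R$ over $c\in\R$ equals $-\infty$ unless $\E R=0$, which reinstates the mean-zero constraint automatically. I expect the direct argument to read more cleanly, the only delicate point being the measurable choice of $R^*$ on the atom when $p=1$.
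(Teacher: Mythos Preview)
Your proof is correct and follows essentially the same route as the paper: H\"older's inequality for the upper bound, and the explicit dual element $R^*=\sgn(X-c^*)|X-c^*|^{p-1}/\|X-c^*\|_p^{p-1}$ for attainment. You are in fact more careful than the paper, which simply asserts that $R^*$ satisfies the constraints without verifying $\E R^*=0$; your derivation of this from the first-order condition at $c^*$ for $p>1$, and your explicit handling of the atom at the median for $p=1$, fill in details the paper leaves implicit.
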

\begin{proof}
Suppose $\sigma_p(X) = \|X-c^*\|_p$. By H\"older's inequality, for any
$R\in\L^q$ with $\E R = 0$ and $\|R\|_q \le 1$ we have
\[
\E (RX) = \E (R(X-c^*)) \le \|R\|_q \cdot \|X-c^*\|_p \le \|X-c^*\|_p.
\]
On the other hand, the two inequalities turn into equalities for
\[
R^* = \frac{ \sgn(X-c^*)\cdot|X-c^*|^{p-1}}{\|X-c^*\|_p^{p-1}}
\] 
and $R^*$ satisfies the above constraints.

\end{proof}

\begin{proof}[Proof of Theorem~\ref{th msr repr}]
Without loss of generality, assume $\E X = 1$.  First we're going to
show that $\msrp$ can be represented through the following optimization
problem:
\begin{equation}
\label{pr1 1}
\msrp(X) = \inf_{R\in \L^q} \{\|R\|_q \mid R\le1\ \as,\ \E R = 0,\ \E(RX)=1\}.
\end{equation}
In \eqref{sp def}, introduce the new  variables: $c=(\E Y)^{-1} \in \R$ and $Z
= cY \in \Lp$. Then
\[
\frac{1}{\msr_p(X)} = \inf_{\substack{Z\in \L^p\\ c\in\R}} \{\sigma_p(Z) \
\mid Z \le cX,\; \E Z =1\}.
\]
Consider the dual of the optimization problem in the RHS (see the Appendix for a brief overview of duality
methods in optimization). Define the dual objective function $g\colon
\L^q_+\times\R \to \R$ by
\[
g(u,v) = \inf_{\substack{Z\in\Lp\\c\in\R}} \{\sigma_p(Z) + \E(u(Z-cX)) - v(\E Z - 1) \}.
\]
The dual problem consists in maximizing $g(u,v)$ over all $u\in \L^q_+$,
$v\in\R$. We want to show that the strong duality takes place, i.e. that the
values of the primal and the dual problems are equal:
\[
\frac{1}{\msr_p(X)} = \sup_{\substack{u\in\L^q_+\\v\in \R}} g(u,v).
\]
To verify the sufficient condition for the strong duality from
Theorem~\ref{th duality}, introduce the optimal value function $\phi\colon \Lp\times\R\to[-\infty,\infty)$
\[
\phi(a,b) = \inf_{\substack{Z\in \L^p\\ c\in\R}} \{\sigma_p(Z) \
\mid Z - cX \le a,\; \E Z -1 = b\} 
\]
(obviously, $(\msrp(X))^{-1} = \phi(0,0)$). Observe that if a pair
$(Z_1,c_1)$ satisfies the constraints in $\phi(a_1,b_1)$ then the pair
$(Z_2,c_2)$ with
\[
c_2 = c_1 + b_2-b_1 + \E(a_1-a_2), \quad Z_2 = Z_1 +a_2- a_1 + (c_2-c_1)X, 
\]
satisfies the constraints in $\phi(a_2,b_2)$. Clearly, $\|Z_1-Z_2\|_p +
|c_1-c_2| = O(\|a_1-a_2\|_p + |b_1-b_2|)$, which implies that $\phi(a,b)$ is
continuous, so the strong duality holds.

Let us now transform the dual problem. It is obvious that if
$\E (uX) \neq 0$, then $g(u,v)=-\infty$ (minimize over $c$). For $u$ such
that $\E(uX)=0$, using the dual representation of $\sigma_p(X)$, we can
write
\[
g(u,v) = \inf_{Z\in\Lp} \sup_{R \in \mathcal{R}} \E(Z(R+u-v) + v)
\qquad\text{if }\E(uX) = 0,
\]
where $\mathcal{R} = \{R \in \L^q : \E R = 0,\; \|R\|_q \le 1\}$ is the dual
set for $\sigma_p$ from Lemma~\ref{sp repr}. Observe that the set
$\mathcal{R}$ is compact in the weak-$*$ topology by the Banach-Alaoglu
theorem. Consequently, by the minimax theorem (see Theorem~\ref{sion
theorem}), the supremum and infimum can be swapped. Then it is easy to see
that $g(u,v)>-\infty$ only if there exists $R\in\mathcal{R}$ such that
$R+u-v=0$ a.s., and in this case $g(u,v) = v$. Therefore, the dual problem
can be written as follows:
\[
\begin{split}
\frac{1}{\msrp(X)} &= \sup_{\substack{u\in \L^q\\v\in\R}} \{v \mid u\ge 0\
\as,\ \E(uX) = 0,\ v-u\in \mathcal{R}\} \\ &=
\sup_{R\in\mathcal{R}} \{\E (RX) \mid R \le \E(RX)\ \as\} \\ &=
\sup_{R\in\L^q} \{\E (RX) \mid R \le \E(RX)\ \as,\ \E R = 0,\ \|R\|_q \le 1\},
\end{split}
\]
where in the second equality we used that if $v-u=R\in\mathcal{R}$, then
the second constraint imply that $v = \E(RX)$ since it is assumed that $\E
X=1$. Now by changing the variable $R$ to $R/\E(RX)$ in the right-hand
side, we obtain representation \eqref{pr1 1}.

From \eqref{pr1 1}, it is obvious that for $p>1$
\begin{equation}
(\msrp(X))^q = \inf_{R\in \L^q} \{\E|R|^q \mid R\le1\ \as,\ \E R = 0,\
\E(RX)=1\}.
\label{pr1 2}
\end{equation}
We'll now consider the optimization problem dual to this one. Denote its
optimal value function by $\phi\colon \L^q\times\R \times \R \to \Re$. It
will be  more convenient to change the optimization variable $R$ here
by $1-R$ (which clearly doesn't change the value of $\phi$), so that
\[
\phi(a,b,c) = \inf_{R\in \L^q} \{\E|R-1|^q \mid R\ge a\ \as,\ \E R = 1+b,\ \E(RX)=c\}.
\]
 
Let us show that $\phi$ is continuous at zero. Denote by
$C(a,b,c)\subset \L^q$ the set of $R\in\L^q$ satisfying the constraints of the
problem. It will be enough to show that if $\|a\|_q,|b|,|c|$ are
sufficiently small then for any $R \in C(0,0,0)$ there exists
$\tilde R \in C(a,b,c)$ such that
$\|R-\tilde R\|_q \le (\|R\|_q+K) (\|a\|_q + |b| + |c|)$ and vice versa. Here $K$ is
some fixed constant.

Since $\P(X<0)\neq0$, there exists $\xi\in \L^\infty$ such that $\xi\ge 0$~a.s.
and $\E(\xi X) =-1$. If $R\in C(0,0,0)$, then one can take the required $\tilde R \in
C(a,b,c)$ in the form
\[
\tilde R =
\begin{cases}
a+ \lambda_1  R  + \lambda_2\xi, &\text{if }\E(aX)\ge 0,\\
a + \mu_1 R + \mu_2, &\text{if }\E(aX)<0,
\end{cases}
\]
where the non-negative constants $\lambda_1,\lambda_2,\mu_1,\mu_2$ can be easily
found from the constraint $\tilde R \in C(a,b,c)$, and it turns out that
$\lambda_1,\mu_1 = 1 + O(\|a\|_q+|b|+|c|)$ and $\lambda_2,\mu_2 = O(\|a\|_q+|b|+|c|)$. If $R\in C(a,b,c)$, then
take
\[
\tilde R =
\begin{cases}
\lambda_1(R-a+\lambda_2\xi), &\text{ if }c \ge \E(aX),\\
\mu_1(R-a+\mu_2), &\text{ if }c < \E(aX),
\end{cases}
\]
with $\lambda_i,\mu_i$ making $\tilde R \in C(0,0,0)$. 

Thus, the strong duality holds in \eqref{pr1 2} and we have
\begin{equation}
\msrp(X) = \sup_{\substack{u\in \L^q_+\\ v,w\in\R}} g(u,v,w)
\label{pr1 3}
\end{equation}
with the dual objective function $g\colon\L^q_+\times\R\times\R\to\Re$
\[
\begin{split}
g(u,v,w) &= \inf_{R\in \L^q} \E(|R|^q + R(u+v+wX) - u - w) \\ &=
-\E\Bigl(\tfrac{q-1}{q^p} |u + v + wX|^p + u+w\Bigr),
\end{split}
\]
where the second inequality is obtained by choosing $R$ which minimizes the
expression under the expectation for every random outcome.

Observe that for any fixed $v,w\in\R$ the optimal $u^*=u^*(v,w)$ in \eqref{pr1 3} can
be found explicitly: $u^* = (v + wX + q)_-$. Then by straightforward
algebraic transformation we obtain \eqref{msr repr}.

For $p=2$, from \eqref{msr repr} we get
\[
(\msr_2(X))^2 = \max_{a,b\in\R}\Bigl\{b - \frac14\E(aX+b)^2_+ - 1\Bigr\}
\]
It is easy to see that it is enough to maximize only over $b\ge 0$.
Maximizing over $b$ and introducing the variable $c=-\frac ab$, we obtain
representation \eqref{msr12 repr} for $p=2$.

To obtain representation \eqref{msr12 repr} for $p=1$, let's again consider
problem \eqref{pr1 1}. Similarly to \eqref{pr1 3} (the only change will be
to use $\|R\|_q$ instead of $\E|R|^q$), we can obtain that
\[
\msr_1(X) = \sup_{\substack{u\in \L^\infty_+\\v,w\in\R}} g(u,v,w), 
\]
where now we denote
\[
g(u,v,w) = \inf_{R\in\L^\infty} \{\|R\|_\infty + \E (R(u+v+wX) - u) - w \}.
\]
Observe that a necessary condition for $g(u,v,w)>-\infty$ is that
$\E|u+v+wX| \le 1$: otherwise take $\tilde R = c(\I(u+v+wX\le0) -
\I(u+v+wX> 0))$ and let $c\to\infty$. 
Under this condition we have $g(u,v,w) = -\E u - w$ since from H\"older's
inequality $|\E((\alpha +v +wX)R)| \le
\|R\|_\infty$ and therefore the infimum in $g$ is attained at $R=0$ a.s.
Consequently, the dual problem becomes
\begin{equation}
\msr_1(X) = -\inf_{\substack{u\in \L^\infty\\ v,w\in\R}}\{\E u + w \mid u\ge0\
\as,\ \E|u+v+wX| \le 1\}.
\label{pr1 4}
\end{equation}
Observe that the value of the infimum is non-positive, and so it is enough to
restrict the values of $w$ to $\R_-$ only. Let's fix  $v\in\R$, $w\in\R_-$ and find the optimal $u^* = u^*(v,w)$. Clearly, whenever
$v+wX(\omega) \ge 0$, it's optimal to take $u^*(\omega)=0$. Whenever
$v+wX(\omega) < 0$, we should have $u^*(\omega) \le |v+wX(\omega)|$, so
that $u(\omega) + v + wX(\omega) \le 0$ (otherwise, the choice  $u^*(\omega)
= |v+wX(\omega)|$ will be better).
Thus for the optimal $u^*$ 
\[
\E|u^* + v + wX| = \E|v+wX| -\E u^*.
\]
In particular, for the optimal $u^*$ the inequality in the second constraint in \eqref{pr1 4}
should be satisfied as the equality, since otherwise it would be possible to find
a smaller $u^*$. Observe that if $\E(v+wX)_+ > 1$, then no $u\in\L^\infty$ exists
which satisfies the constraint of the problem. On the other hand, if
$\E(v+wX)_+ \le 1$ then at least one such $u$ exists. Consequently, problem
\eqref{pr1 4} can be rewritten as follows:
\[
-\msr_1(X) = \inf_{v\in\R,w\in\R_-} \{\E|v+wX| + w - 1 \mid \E(v+wX)_+ \le 1\}.
\]
Clearly, $\E|v^*+w^*X|\le 0$ for the optimal pair $(v^*,w^*)$, so the
constraint should be satisfied as the equality (otherwise multiply both
$v,w$ by $1/\E|v+wX)_+$, which will decrease the value of the objective
function). By a straightforward transformation, we get
\[
1 + \msr_1(X) = \sup_{v\in\R,w\in\R_-} \{v \mid \E(v+wX)_+ = 1\}
\]
and introducing the new variable $c=w/v$, we obtain representation~\eqref{msr repr}.
\end{proof}

\subsection{Basic properties}
\begin{theorem}
\label{thm msr prop}
For any $p\in[1,\infty)$, the monotone Shape ratio in $\Lp$ satisfies the
following properties.
\begin{enumerate}[leftmargin=*,itemsep=0mm,topsep=0mm,parsep=0mm,label=(\alph*)]
\item (Quasi-concavity) For any $c\in \R$, the set $\{X \in \Lp: \msrp(X) \ge c\}$ is convex.
\item (Scaling invariance) $\msrp(\lambda X) = \msrp(X)$ for any real $\lambda>0$.
\item (Law invariance) If $X$ and $Y$ have the same distribution, then $\msrp(X) = \msrp(Y)$.
\item (2nd order monotonicity) If $X$ dominates $Y$ in the second stochastic
order, then $\msrp(X)\ge\msrp(Y)$.
\item (Continuity) $\msrp(X)$ is continuous with respect to $\L^p$-norm at any $X$ such
that  $\E X > 0$ and $\P(X<0)\neq 0$.
\end{enumerate}
\end{theorem}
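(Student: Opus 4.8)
My plan is to dispatch (b) and (a) directly from the definition \eqref{sp def} and the sublinearity of $\sigma_p$, obtain (c) from Theorem~\ref{th msr repr}, derive (d) from (c) together with a Strassen-type coupling, and treat (e) via the same representation, where the real work lies. For (b), positive homogeneity $\sigma_p(\lambda Y)=\lambda\sigma_p(Y)$ and linearity of $\E$ give $\E(\lambda Y)/\sigma_p(\lambda Y)=\E Y/\sigma_p(Y)$ for $\lambda>0$, while $Y\le X$ iff $\lambda Y\le\lambda X$, so the two suprema coincide. For (a), since $\msrp\ge0$ the superlevel set is all of $\Lp$ when $c\le0$, so only $c>0$ matters; given $X_1,X_2$ with $\msrp(X_i)\ge c$ and $\lambda\in(0,1)$, I would pick $Y_i\le X_i$ with $\E Y_i/\sigma_p(Y_i)\ge c-\epsilon>0$ (taking the ratio arbitrarily large if $\msrp(X_i)=\infty$), set $Y=\lambda Y_1+(1-\lambda)Y_2\le\lambda X_1+(1-\lambda)X_2$, and combine subadditivity and homogeneity of $\sigma_p$ with the mediant inequality
\[
\frac{\E Y}{\sigma_p(Y)}\ge\frac{\lambda\E Y_1+(1-\lambda)\E Y_2}{\lambda\sigma_p(Y_1)+(1-\lambda)\sigma_p(Y_2)}\ge c-\epsilon,
\]
and let $\epsilon\to0$. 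The degenerate cases where some $\sigma_p(Y_i)=0$ (a positive constant $Y_i$) are handled separately and cause no difficulty.

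For law invariance (c) I would simply invoke Theorem~\ref{th msr repr}: in the main case $\E X>0$, $\P(X<0)\neq0$ the right-hand sides of \eqref{msr repr} and \eqref{msr12 repr} depend only on the distribution of $X$, not on the underlying space, and in the degenerate cases $\msrp(X)\in\{0,\infty\}$ is likewise determined by the law; since \eqref{msr repr} covers $p\in(1,\infty)$ and \eqref{msr12 repr} covers $p=1$, all $p\in[1,\infty)$ are treated, and law invariance holds even across different probability spaces.

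For second-order monotonicity (d) I would combine (c) with the property $\sigma_p(\E(Z\mid\mathcal{G}))\le\sigma_p(Z)$. If $X$ dominates $Y$ in the second stochastic order, Strassen's theorem furnishes, on one probability space, variables $\hat X$ with the law of $X$ and $\hat Y$ with the law of $Y$ such that $\E(\hat Y\mid\mathcal{H})\le\hat X$ a.s., where $\mathcal{H}=\sigma(\hat X)$. For any competitor $Y'\le\hat Y$ with $\E Y'>0$, the variable $\E(Y'\mid\mathcal{H})$ satisfies $\E(Y'\mid\mathcal{H})\le\E(\hat Y\mid\mathcal{H})\le\hat X$, has the same mean $\E Y'$, and has smaller deviation $\sigma_p(\E(Y'\mid\mathcal{H}))\le\sigma_p(Y')$; hence it is admissible for $\msrp(\hat X)$ and $\E Y'/\sigma_p(Y')\le\msrp(\hat X)$. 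Taking the supremum over $Y'$ and applying (c) twice yields $\msrp(Y)=\msrp(\hat Y)\le\msrp(\hat X)=\msrp(X)$. Note that (c) is genuinely used here (to pass to the coupled variables) but was established independently above, so there is no circularity.

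For continuity (e) I would work with the representation, writing $(\msrp(X))^q=\max_{a,b}F(a,b,X)$ for $p\in(1,\infty)$ with $F(a,b,X)=b-\E(\tfrac{q-1}{q^p}|(aX+b)_+-q|^p+(aX+b)_+)$, the case $p=1$ being analogous through \eqref{msr12 repr}. For fixed $(a,b)$ the map $X\mapsto F(a,b,X)$ is continuous on $\Lp$, so $(\msrp)^q$, as a supremum of such maps, is lower semicontinuous; the substance is upper semicontinuity. The decisive step, which I expect to be the main obstacle of the whole theorem, is a uniform coercivity estimate: there is an $\Lp$-neighbourhood $N$ of $X_0$ and a compact $K\subset\R^2$ such that for every $X\in N$ all maximisers lie in $K$. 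Granting this, $(\msrp(X))^q=\max_{(a,b)\in K}F(a,b,X)$ on $N$, and joint continuity of $F$ on $K\times N$ makes the optimal value, hence $\msrp$, continuous (and $\msrp$ stays in $(0,\infty)$ because $\E X_0>0$ and $\P(X_0<0)\neq0$). To prove the coercivity I would show $F(a,b,X)\to-\infty$ as $|(a,b)|\to\infty$ uniformly over $X\in N$, using that $\P(X>0)>0$ (a consequence of $\E X>0$) blocks the directions $a\to+\infty$ and $b\to+\infty$ via the $|\cdot|^p$ penalty, while $\P(X<0)\neq0$ blocks $a\to-\infty$; the relevant probabilities and $\Lp$-moments are stable under small $\Lp$-perturbations, and making this quantitative is exactly what delivers uniformity on $N$.
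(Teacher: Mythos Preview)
Your proposal is correct, and for (a), (b), (d) it matches the paper's proof essentially line for line (the paper frames (a) as ``$S_p$ is quasi-concave and a supremum of quasi-concave functions is quasi-concave'', but the mediant inequality you write is exactly the computation behind it; the paper's (d) uses the same Strassen-type coupling and the same conditioning step $Y'\mapsto\E(Y'\mid\sigma(\hat X))$).

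The two genuine differences are in (c) and (e). For (c) you invoke Theorem~\ref{th msr repr}; the paper instead gives a direct, self-contained argument: for any $Y\le X$ replace $Y$ by $\tilde Y=\E(Y\mid X)$, which still satisfies $\tilde Y\le X$, has the same mean, and has no larger $\sigma_p$, so the supremum in \eqref{sp def} may be restricted to $\sigma(X)$-measurable competitors and hence depends only on the law of $X$. This is precisely the conditioning trick you already use in (d), so you could have obtained (c) the same way without appealing to the heavier representation theorem; the paper's route is more elementary and also yields the side information that $\sigma(X)$-measurable $Y$ suffice. For (e) the situation is reversed: the paper dispatches continuity in one line (``follows from that the expectation and the $\Lp$-deviation are uniformly continuous''), leaving to the reader the perturbation argument $Y\mapsto Y+(X'-X)$ together with the observation that $\P(X<0)\neq0$ rules out near-optimizers with vanishing $\sigma_p$. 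Your Berge-type argument via \eqref{msr repr}, with the uniform coercivity of $F(a,b,\cdot)$ on an $\Lp$-neighbourhood, is longer but makes this step explicit; both approaches are valid.
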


Before proving this theorem, let us briefly discuss the properties in the context of
the portfolio selection problem.

The quasi-concavity implies that the monotone Sharpe ratio favors portfolio
diversification: if $\msrp(X)\ge c$ and $\msrp(Y)\ge c$, then $\msrp(\lambda
X + (1-\lambda)Y)\ge c$ for any $\lambda\in[0,1]$, where $\lambda X +
(1-\lambda)Y$ can be thought of as diversification between portfolios with
returns $X$ and $Y$. Note that the property of quasi-concavity is weaker
than concavity; it's not difficult to provide an example showing that
the monotone Sharpe ratio is not concave.

The scaling invariance can be interpreted as that the monotone Sharpe ratio
cannot be changed by leveraging a portfolio (in the same way as the standard
Sharpe ratio). Namely, suppose $X=R_x$, where $R_x = \sp xR$ is the return
of portfolio $x\in \R^{n+1}$ (as in Section~\ref{msr intro}), $\sum_i x_i =
1$. Consider a leveraged portfolio $\tilde x$ with $\tilde x_i = \lambda
x_i$, $i\ge 1$ and $\tilde x_0 = 1 - \sum \tilde x_i$, i.e. a portfolio
which is obtained from $x$ by proportionally scaling all the risky
positions. Then it's easy to see that $R_{\tilde x} = \lambda R_x$, and so
$\msrp(R_x) = \msrp(R_{\tilde x})$.

Law invariance, obviously, states that we are able to evaluate the
performance knowing only the distribution of the return. The interpretation
of the continuity property is also clear.

The 2nd order monotonicity means that $\msrp$ is consistent with preferences
of risk-averse investors. Recall that it is said that the distribution of a
random variable $X$ dominates the distribution of $Y$ in the 2nd stochastic
order, which we denote by $X\geqs Y$, if $\E U(X) \ge \E U(Y)$ for any
increasing concave function $U$ such that $\E U(X)$ and $\E U(Y)$ exist.
Such a function $U$ can be interpreted as a utility function, and then the
2nd order stochastic dominance means that $X$ is preferred to $Y$ by any
risk averse investor.

Regarding the properties from Theorem~\ref{thm msr prop}, let us also
mention the paper \cite{ChernyMadan08}, which studies performance measures
by an axiomatic approach in a fashion similar to the axiomatics of coherent
and convex risk measures. The authors define a performance measure (also
called an acceptability index) as a functional satisfying certain
properties, then investigate implications of those axioms, and show a deep
connection with coherent risk measures, as well as provide examples of
performance measures. The minimal set of four axioms a performance measure
should satisfy consists of the quasi-concavity, monotonicity, scaling
invariance and semicontinuity (in the form of the so-called Fatou property
in $\L^\infty$, as the paper \cite{ChernyMadan08} considers only functionals
on $\L^\infty$). In particular, the monotone Sharpe ratio satisfies those
axioms and thus provides a new example of a performance measure in the sense
of this system of axioms. It also satisfies all the additional natural
properties discussed in that paper: the law invariance, the arbitrage
consistency ($\msrp(X) = +\infty$ iff $X\ge 0$ a.s. and $\P(X>0)\neq 0$) and
the expectation consistency (if $\E X < 0$ then $\msrp(X) = 0$, and if $\E X
> 0$ then $\msrp(X) > 0$; this property is satisfied for $p>1$).

\begin{proof}[Proof of Theorem~\ref{thm msr prop}]
Quasi-concavity follows from that the $\Lp$--Sharpe ratio $S_p(X) = \frac{\E X}{\sigma_p(X)}$ is
quasi-concave. Indeed, if $S_p(X) \ge c$ and $S_p(Y) \ge c$, then
\[
S_p(\lambda X +(1-\lambda)Y) \ge \frac{\lambda \E X + (1-\lambda)\E Y}{\lambda
\sigma_p(X) + (1-\lambda) \sigma_p(Y)} \ge c
\]
for any $\lambda\in[0,1]$. 
Since $\msr_p$ is the maximum of $f_Z(X) = S_p(X-Z)$ over
$Z\in\L^p_+$, the quasi-concavity is preserved.

The scaling invariance is obvious. Since the expectation and the
$\Lp$-deviation are law invariant, in order to prove the law invariance of
$\msrp$, it is enough to show that the supremum in the definition of
$\msrp(X)$ can be taken over only $Y\le X$ which are measurable with respect
to the $\sigma$-algebra generated by $X$, or, in other words, $Y=f(X)$ for
some measurable function $f$ on $\R$. But this follows from the fact that if
for any $Y\le X$ one considers $\tilde Y = \E(Y\mid X)$, then $\tilde Y \le
X$, $\E(\tilde Y) = \E Y$ and $\sigma_p(Y) \le \sigma_p(Y)$, hence
$S_p(\tilde Y ) \ge S_p(Y)$.

To prove the 2nd order monotonicity, recall that another characterization of
the 2nd order stochastic dominance is as follows: $X_1\leqs X_2$ if and only
if there exist random variables $X_2'$ and $Z$ (which may be defined on a
another probability space) such that $X_2 \stackrel{d}{=} X_2'$, $X_1
\stackrel{d}{=} X_2' + Z$ and $\E(Z\mid X_2') \le 0$. Suppose $X_1 \leqs
X_2$. From the law invariance, without loss of generality, we may assume
that $X_1,X_2,Z$ are defined on the same probability space. Then for any
$Y_1\le X_1$ take $Y_2 = \E(Y_1 \mid X_2)$. Clearly, $Y_2 \le X_2$, $\E Y_2
= \E Y_1$ and $\sigma_p(Y_2) \le \sigma_p(Y_1)$. Hence $\sigma_p(X_1) \le
\sigma_p(X_2)$.

Finally, the continuity of $\msrp(X)$ follows from that the expectation and
the $\Lp$-deviation are uniformly continuous. 
\end{proof}

\section{Buffered probabilities}
In the paper \cite{RockafellarRoyset10} was introduced the so-called
buffered probability, which is defined as the inverse function of the
conditional value at risk (with respect to the risk level). The authors of
that and other papers (for example, \cite{DavisUryasev16}) argue that in
stochastic optimization problems related to minimization of probability of
adverse events, the buffered probability can serve as a better optimality
criterion compared to the usual probability.

In this section we show that the monotone Sharpe ratio is tightly related to
the buffered probability, especially in the cases $p=1,2$. In particular,
this will provide a connection of the monotone Share ratio with the
conditional value at risk. We begin with a review of the conditional value
at risk and its generalization to the spaces $\Lp$. Then we give a
definition of the buffered probability, which will generalize the one in
\cite{RockafellarRoyset10,MafusalovUryasev14} from $\L^1$ to arbitrary $\Lp$.

\subsection{A review of the conditional value at risk}
Let $X$ be a random variable, which describes loss. As opposed to the
previous section, now large values are bad, small values are good (negative
values are profits). For a moment, to avoid technical difficulties, assume
that $X$ has a continuous distribution.

Denote by $Q(X,\lambda)$ the $\lambda$-th quantile of the distribution of
$X$, $\lambda\in[0,1]$, i.e. $Q(X,\lambda)$ is a number $x\in\Re$, not
necessarily uniquely defined, such that $\P(X\le x) = \lambda$. The quantile
$Q(X,\lambda)$ is also called the value at risk\footnote{Some authors use
definitions of VAR and CVAR which are slightly different from the ones used
here: for example, take $(-X)$ instead of $X$, or $1-\lambda$ instead of
$\lambda$, etc.} (VAR) of $X$ at level $\lambda$, and it shows that in the
worst case of probability $1-\lambda$, the loss will be at least
$Q(X,\lambda)$. This interpretation makes VAR a sort of a measure of risk
(in a broad meaning of this term), and it is widely used by practitioners.

However, it is well-known that VAR lacks certain properties that one expects
from a measure of risk. One  of the most important drawbacks is that it
doesn't show what happens with probability less than $1-\lambda$. For
example, an investment strategy which loses \$1\,mln with probability 1\% and \$2\,mln with probability
0.5\% is quite different from a strategy which loses \$1\,mln and \$10\,mln with the same
probabilities, however they will have the same VAR at the 99\% level. Another
drawback of VAR is that it's not convex -- as a consequence, it may not
favor diversification of risk, which leads to concentration of risk (above
$1-\lambda$ level).

The conditional value at risk (CVAR; which is also called the average value
at risk, or the expected shortfall, or the superquantile) is considered as
an improvement of VAR. Recall that if $X\in\L^1$ and has a continuous
distribution, then CVAR of $X$ at risk level $\lambda\in[0,1]$ can be
defined as the conditional expectation in its right tail of probability
$1-\lambda$, i.e.
\begin{equation}
\cvar(X,\lambda) = \E(X \mid X> Q(X,\lambda))
\label{cvar def}
\end{equation}
We will also use the notation $\QQ(X,\lambda) = \cvar(X,\lambda)$ to
emphasize the
connection with quantiles.

CVAR provides a basic (and the most used) example of a coherent risk measure. The theory of risk
measures, originally introduced in the seminal paper \cite{ADEH99}, plays now a
prominent role in applications in finance. We are not going to discuss all
the benefits of using coherent (and convex) risk measures in optimization
problems;  a modern review of the main
results in this theory can be found, for example, in the monograph
\cite{FollmerSchied11}. 

Rockafellar and Uryasev \cite{RockafellarUryasev00} proved that CVAR admits the following
representation though the optimization problem
\begin{equation}
\QQ(X,\lambda) = \min_{c\in\R} \biggl(\frac{1}{1-\lambda} \E(X-c)_+ + c\biggr).
\label{cvar repr}
\end{equation}
Actually, this formula can be used as a general definition for CVAR, which
works in the case of any distribution of $X$, not necessarily continuous. The
importance of this representation is that it provides an efficient method to
compute CVAR, which in practical applications often becomes much faster
than e.g. using formula~\eqref{cvar def}. It also behaves ``nicely'' when CVAR
is used as a constraint or an optimality criterion in convex optimization
problems, for example portfolio selection. Details can be found in \cite{RockafellarUryasev00}.

Representation~\eqref{cvar repr} readily suggests how CVAR can be
generalized to ``put more weight'' on the right tail of the distribution of
$X$, which provides a coherent risk measure for the space $\Lp$.

\begin{definition*}
For $X\in\Lp$, define the $\Lp$-CVAR at level $\lambda\in[0,1)$ by
\[
\QQ_p(X,\lambda) = \min_{c\in\R} \biggl(\frac{1}{1-\lambda} \|(X-c)_+\|_p + c\biggr).
\]
\end{definition*}
The $\Lp$-CVAR was studied, for example, in the papers \cite{Krokhmal07,CheriditoLi09}. In particular,
in \cite{Krokhmal07}, it was argued that higher values of $p$ may provide better
results than the standard CVAR ($p=1$) in certain portfolio selection
problems. For us, the cases $p=1,2$ will be the most interesting due  the
direct connection with the monotone Sharpe ratio, as will be shown in the
next section.

It is known that the following dual representation holds for $\L^p$-CVAR,
which we will use below: for any $X\in \L^p$ and $\lambda\in[0,1)$
\begin{equation}
\QQ_p(X,\lambda) = \sup\bigl\{\E(RX) \mid R\in \L^q_+,\ \|R\|_q \le
(1-\lambda)^{-1},\ \E R =1\bigr\},
\label{cvar dual}
\end{equation}
where, as usual, $\frac1p+\frac1q =1$. This result is proved in \cite{CheriditoLi09}.

\subsection{The definition of buffered probability and its
representation}
Consider the function inverse to CVAR in $\lambda$, that is for a random
variable $X$ and $x\in\R$ define $\PP(X,x) = \lambda$ where $\lambda$ is
such that $\QQ(X,\lambda)=x$ (some care should be taken at points of
discontinuity, a formal definition is given below). In the papers
\cite{RockafellarRoyset10,MSU18,MafusalovUryasev14}, $\PP(X,x)$ was called
the ``buffered'' probability that $X>x$; we explain the rationale behind
this name below. At this moment, it may seem that from a purely
mathematical point of view such a simple operation as function inversion
probably shouldn't deserve much attention. But that's not the case if we
take applications into account. For this reason, before we give any
definitions, let us provide some argumentation why studying $\PP(X,x)$ may
be useful for applications.
  
In many practical optimization problems one may want to consider constraints
defined in terms of probabilities of adverse event, or to use those
probabilities as optimization criteria. For example, an investment fund manager may want to
maximize the expected return of her portfolio under the constraint that the
probability of a loss more than \$1\,mln should be less than 1\%; or an
engineer wants to minimize the construction cost of a structure
provided that the tension in its core part can exceed a critical
threshold with only a very small probability during its lifetime.

Unfortunately, the probability has all the same drawbacks as the value at
risk, which were mentioned above: it's not necessarily convex, continuous and doesn't provide
information about how wrong things can go if an adverse event
indeed happens. For those reasons, CVAR may be a better risk measure, which
allows to avoid some of the problems. For example, if using CVAR, the
above investor can reformulate her problem as maximization of the expected
return given that the average loss in the worst 1\% of cases doesn't exceed
\$1\,mln. However, such a setting of the problem  may be inconvenient, as CVAR ``speaks'' in terms of
quantiles, but one may need the answer in terms of probabilities. For
example, \$1\,mln may be value of liquid assets of the fund which can be
quickly and easily sold to cover a loss; so the manager must ensure that the loss
doesn't exceed this amount. But it is not clear how she can use the information about the
average loss which CVAR provides. A similar problem arises in the
example with an engineer.

In \cite{RockafellarRoyset10}, Rockafellar and Royset proposed the idea that the inverse of
CVAR may be appropriate for such cases: since quantiles and probabilities
are mutually inverse, and CVAR is a better alternative to quantiles, then
one can expect that the inverse of CVAR, the buffered probability, could be a better alternative to
probability. Here, we follow this idea.

Note that, in theory, it is possible to invert CVAR as a function in
$\lambda$, but, in practice, computational difficulty may be a serious
problem for doing that: it may take too much time to compute CVAR for a
complex system even for one fixed level of risk $\lambda$, so inversion,
which requires such a computation for several $\lambda$, may be not feasible
(and this is often the case in complex engineering or financial models).
Therefore, we would like to be able to work directly with buffered
probabilities, and have an efficient method to compute them. We'll see that
the representation given below turns out to give more than just an efficient
method of computation. In particular, in view of Section 2, it will show a
connection with the monotone Sharpe ratio, a result which is by no means
obvious.

The following simple lemma will be needed to show that it is possible to invert CVAR.

\begin{lemma}
\label{bpoe lemma}
For $X\in \L^p$, $p\in[1,\infty)$, the function $f(\lambda) = 
\QQ_p(X,\lambda)$ defined for $\lambda\in[0,1)$ has the following properties:
\begin{enumerate}[leftmargin=*,itemsep=0mm,topsep=1mm]
\item $f(0) = \E X$;
\item $f(\lambda)$ is continuous and non-decreasing;
\item $f(\lambda)$ is strictly increasing on the set $\{\lambda : f(\lambda)
< \esssup X\}$;
\item if $P:=\P(X=\esssup X) > 0$, then $f(\lambda) = \esssup X$ for $\lambda\in[1-P^{1/p},1)$.
\end{enumerate}
\end{lemma}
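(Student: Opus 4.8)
The plan is to remove the awkward $1/(1-\lambda)$ factor by the substitution $s=(1-\lambda)^{-1}$, which maps $[0,1)$ increasingly and homeomorphically onto $[1,\infty)$, and to study
\[
h(s)=\inf_{c\in\R}\bigl(s\,\|(X-c)_+\|_p+c\bigr),\qquad f(\lambda)=h\bigl((1-\lambda)^{-1}\bigr).
\]
The key structural observation is that for each fixed $c$ the map $s\mapsto s\,\|(X-c)_+\|_p+c$ is affine in $s$ with non-negative slope $\|(X-c)_+\|_p$. Hence $h$, being an infimum of such functions, is concave and non-decreasing on $[1,\infty)$. Since $\lambda\mapsto s$ is an increasing homeomorphism, every monotonicity and continuity claim about $f$ transfers to the corresponding claim about $h$, so I would work with $h$ throughout and abbreviate $M=\esssup X$ and $A_c=\|(X-c)_+\|_p$.

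For (1), I would bound $h(1)=\inf_c(A_c+c)$ from both sides: since $\|\cdot\|_p\ge\|\cdot\|_1$ on a probability space and $(X-c)_+\ge X-c$, one gets $A_c+c\ge\E(X-c)_++c\ge\E X$, so $h(1)\ge\E X$; for the reverse inequality I would either invoke the dual representation \eqref{cvar dual} at $\lambda=0$, where $\E R=1$ and $\|R\|_q\le1$ force $R=1$ a.s.\ by equality in H\"older's inequality, or note directly that $A_c+c\to\E X$ as $c\to-\infty$ (using $X\in\Lp$). Property (2) then splits in two: monotonicity of $f$ is immediate because $s\mapsto sA_c+c$ is non-decreasing for every $c$ and the infimum preserves this, while continuity on $(0,1)$ is free from the concavity of $h$ on the open ray $(1,\infty)$.

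The one delicate point, which I expect to be the main obstacle, is right-continuity of $h$ at the endpoint $s=1$ (equivalently, continuity of $f$ at $\lambda=0$), because there the minimizer in $c$ escapes to $-\infty$ and no minimizer exists. I would handle it by a two-sided squeeze that avoids the minimizer altogether: fixing any $c$ gives $\limsup_{s\to1^+}h(s)\le A_c+c$, and taking the infimum over $c$ yields $\limsup_{s\to1^+}h(s)\le h(1)$; conversely, concavity of $h$ together with the finiteness of $h(t)$ for a single $t>1$ (which follows from the coercivity of $c\mapsto sA_c+c$ when $s>1$) gives $\liminf_{s\to1^+}h(s)\ge h(1)$. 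The two bounds force $\lim_{s\to1^+}h(s)=h(1)$.

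For (3) I would use an envelope-type inequality: if $c_2$ is a minimizer for $s_2>1$ (which exists by coercivity), then for $s_1<s_2$ one has $h(s_1)\le s_1A_{c_2}+c_2$, whence $h(s_2)-h(s_1)\ge(s_2-s_1)A_{c_2}$. Now $A_{c_2}=0$ is equivalent to $c_2\ge M$, which would force $h(s_2)=c_2\ge M$; so whenever $h(s_2)<M$ we have $A_{c_2}>0$ and thus $h(s_2)>h(s_1)$, giving strict monotonicity on $\{h<M\}$. Finally, for (4), taking $c=M$ shows $h(s)\le M$ for all $s$; for the reverse bound, when $c<M$ the contribution of the atom at $M$ gives $A_c\ge(M-c)P^{1/p}$, so that $sA_c+c\ge M$ as soon as $sP^{1/p}\ge1$, i.e.\ $s\ge P^{-1/p}$. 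Hence $h(s)=M$ for $s\ge P^{-1/p}$, which in the $\lambda$ variable is exactly $\lambda\in[1-P^{1/p},1)$.
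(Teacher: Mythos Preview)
Your proof is correct. The overall strategy is close to the paper's, but there are a few differences worth flagging.

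For property~3 you and the paper use essentially the same envelope argument: take a minimizer $c^*$ at the larger parameter and plug it in at the smaller one; the gap is $(s_2-s_1)\,\|(X-c^*)_+\|_p$, which is strictly positive whenever $c^*<\esssup X$. For property~1 the paper simply appeals to the dual representation~\eqref{cvar dual}, which you also list as one option; your alternative route via $A_c+c\to\E X$ as $c\to-\infty$ is valid but needs a short expansion for $p>1$, so the dual argument is the cleaner of the two.

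Where you genuinely depart from the paper is in two places. First, the substitution $s=(1-\lambda)^{-1}$ and the observation that $h$ is a pointwise infimum of affine functions with non-negative slopes gives concavity and monotonicity in one stroke; this makes continuity on $(0,1)$ automatic, whereas the paper just asserts that property~2 ``can be easily obtained from the definition'' without further detail. Your explicit treatment of right-continuity at $\lambda=0$ is also a point the paper leaves implicit; note incidentally that the $\liminf$ bound follows already from monotonicity of $h$, so the appeal to concavity there is not needed. Second, for property~4 the paper obtains the lower bound $\QQ_p(X,1-P^{1/p})\ge\esssup X$ via the dual representation, exhibiting $R=P^{-1}\I(X=\esssup X)$ as a feasible dual variable. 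Your purely primal estimate $A_c\ge(M-c)P^{1/p}$ for $c<M$ reaches the same conclusion without invoking duality, which is slightly more elementary.
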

\begin{proof}
The first property obviously follows from the dual representation, and the
second one can be easily obtained from the definition. To prove the third
property, observe that if $\QQ_p(X,\lambda)<\esssup X$, then the minimum in
the definition is attained at some $c^*<\esssup X$. So, for any
$\lambda'<\lambda$ we have $\QQ_p(X,\lambda') \le
\frac{1}{1-\lambda'}\|(X-c^*)_+\|_p+c^* < \QQ_p(X,\lambda)$ using that
$\|(X-c^*)_+\|_p>0$.

Finally, the fourth property follows from that if $P> 0$, and, in
particular, $\esssup X < \infty$, then $\QQ_p(X,\lambda) \le \esssup X$ for
any $\lambda\in[0,1)$, as one can take $c=\esssup X$ in the definition. On
the other hand, for $\lambda_0 = 1-P^{\frac1p}$ we have that $R =
P^{-1}\I(X=\esssup X)$ satisfies the constraint in the dual representation
and $\E(RX) = \esssup X$. Hence $\QQ(X,\lambda_0) = \esssup X$, and then
$\QQ_p(X,\lambda) = \esssup X$ for any $\lambda\ge \lambda_0$ by the
monotonicity.
\end{proof}

\begin{definition*}
For $X\in\Lp$, $p\in[1,\infty)$, and $x\in\R$, set
\[
\PP_p(X,x) =
\begin{cases}
0, &\text{if }x> \esssup X,\\
(\P(X=\sup X))^{\frac1p}, &\text{if }x= \esssup X,\\
1-\mathbb Q^{-1}_p(X,x), &\text{if } \E X < x < \esssup X,\\
1, &\text{if }x \le \E X.
\end{cases}
\]
\end{definition*}
The ``main'' case in this definition is the third one. In particular, one
can see that for a random variable $X$ which has a distribution with a
support unbounded from above, the first and the second cases do not realize.
Figure~\ref{fig bpoe} schematically shows the relation between the quantile
function, the CVAR, the probability distribution function, and the buffered
probability. In particular, it is easy to see that always $\PP_p(X,x) \ge
\P(X>x)$. According to the terminology of \cite{RockafellarRoyset10}, the
difference between these two quantities is a ``safety buffer'', hence the
name buffered probability.

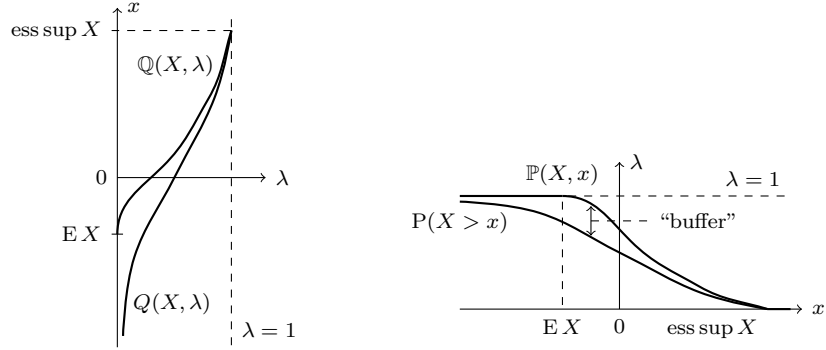
\begin{figure}[h]
\begin{tikzpicture}[scale=1.5]
\draw[thick] (0,-0.5) to[out=90, in=225] (0.4,0.1) to[out=45,in=240]
(0.8,0.7) to[out=60,in=253] node[left,xshift=0.1em] {\scriptsize
$\QQ(X,\lambda)$} (1,1.3);
\draw[thick] plot [smooth,tension=0.7] coordinates { (0.05,-1.4) (0.15,-0.7)
(0.5,0) (0.85,0.7) (1,1.3) };
\draw[->] (0,0) node[left] {\scriptsize $0$} -- (1.3,0) node[right] {\scriptsize $\lambda$};
\draw[->] (0,-1.5) -- (0,1.5) node[right](x) {\scriptsize $x$};
\draw[dashed] (1,1.4) -- (1,-1.5) node[above right] {\scriptsize $\lambda=1$};;
\draw (0.05,-1.3) node[above right] {\scriptsize $Q(X,\lambda)$};
\draw (0.05,-0.5)--(-0.05,-0.5) node[left] {\scriptsize $\E X$};
\draw [dashed] (1,1.3)--(-0.05,1.3) node[left] {\scriptsize $\esssup X$};
\end{tikzpicture}
\hspace{10mm}
\begin{tikzpicture}[scale=1.5]
\draw[thick] plot [smooth,tension=0.7] coordinates { (-1.4,0.95) (-0.7,0.85)
(0,0.5) (0.7,0.15) (1.3,0.0) };
\draw[thick] (-0.5,1) to[out=0, in=135] (0.1,0.6) to[out=-45,in=150] (0.7,0.2) to[out=-30,in=163]
(1.3,0);
\draw[thick] (-1.4,1)--(-0.5,1);

\draw[->] (0,0) node[below] {\scriptsize $0$} -- (0,1.3) node[right] {\scriptsize $\lambda$};
\draw[->] (-1.4,0) -- (1.6,0) node[right] {\scriptsize $x$};
\draw[dashed] (-0.45,1) -- (1.5,1) node[above left] {\scriptsize $\lambda=1$};

\draw (-0.5,1) node[above] {\scriptsize $\mathbb P(X,x)$};
\draw (-1.4,0.95) node[below] {\scriptsize $\P(X>x)$};
\draw[dashed] (-0.5,0.95)--(-0.5,0) node[below] {\scriptsize $\E X$};
\draw (1.3,0) node[below left] {\scriptsize $\esssup X$};
\draw[thick] (1.3,0.0)--(1.5,0.0);
\draw[<->] (-0.25,0.65)--(-0.25,0.91);
\draw[dashed] (-0.25,0.78) -- (0.25,0.78) node[right] {\scriptsize ``buffer''};
\end{tikzpicture}
\centering
\caption{Left: quantile and distribution functions. 
Right: complementary probability distribution function and buffered
probability $\PP(X,x)$. In this example, $\esssup X < \infty$, but
$\P(X=\esssup X) =0$, so $\PP(X,x)$ is continuous everywhere.}
\label{fig bpoe}
\end{figure}

\begin{theorem}
For any $X\in \L^p$
\begin{equation}
\PP_p(X,x) = \min_{c\ge 0} \|(c(X-x)+1)_+\|_p.
\label{bpoe formula}
\end{equation}
\end{theorem}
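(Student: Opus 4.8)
The plan is to show that the single formula on the right reproduces all four cases of the definition of $\PP_p$, with essentially all the work concentrated in the principal case $\E X < x < \esssup X$; the three remaining cases will be quick direct computations. The key device for the main case is a change of variable that turns the minimization $\min_{c\ge0}\|(c(X-x)+1)_+\|_p$ into the minimization defining $\QQ_p$, after which I invoke the fact that $\PP_p$ is by definition the inverse of $\QQ_p$ in $\lambda$.

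First I would record the substitution. For $c>0$, put $c'=x-1/c$, so that $x-c'=1/c>0$ and $c(X-x)+1 = (X-c')/(x-c')$ identically; taking positive parts and using $x-c'>0$ gives $\|(c(X-x)+1)_+\|_p = \|(X-c')_+\|_p/(x-c')$. As $c$ ranges over $(0,\infty)$, $c'$ ranges over $(-\infty,x)$, and the endpoint $c=0$ contributes the value $\|1\|_p=1$. Hence
\[
\min_{c\ge0}\|(c(X-x)+1)_+\|_p = \min\Bigl(1,\ \inf_{c'<x}\frac{\|(X-c')_+\|_p}{x-c'}\Bigr).
\]

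Next, in the main case, by Lemma~\ref{bpoe lemma} there is a unique $\lambda^*\in(0,1)$ with $\QQ_p(X,\lambda^*)=x$, and $\PP_p(X,x)=1-\lambda^*$. I would let $c^*$ be a minimizer in the definition of $\QQ_p(X,\lambda^*)$; the inner objective $c'\mapsto \frac{1}{1-\lambda^*}\|(X-c')_+\|_p+c'$ is convex and coercive (it tends to $+\infty$ as $c'\to\pm\infty$ because $1/(1-\lambda^*)>1$), so a minimizer exists. Since $x<\esssup X$ one cannot have $X\le c^*$ a.s., so $\|(X-c^*)_+\|_p>0$ and the identity $x=\frac{1}{1-\lambda^*}\|(X-c^*)_+\|_p+c^*$ forces $c^*<x$ and $1-\lambda^* = \|(X-c^*)_+\|_p/(x-c^*)$; this gives the bound ``$\le 1-\lambda^*$''. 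For the reverse, for any $c'<x$ the value $\frac{1}{1-\lambda^*}\|(X-c')_+\|_p+c'$ is at least $\QQ_p(X,\lambda^*)=x$, which rearranges to $\|(X-c')_+\|_p/(x-c')\ge 1-\lambda^*$; together with $1-\lambda^*\le1$ this shows the right-hand side equals $1-\lambda^*=\PP_p(X,x)$.

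Finally I would dispatch the boundary cases. For $x\le\E X$, Jensen's inequality gives $\|(c(X-x)+1)_+\|_p\ge \E(c(X-x)+1)_+\ge \E(c(X-x)+1)=1+c(\E X-x)\ge1$, with equality at $c=0$, so the minimum is $1$. For $x>\esssup X$ the integrand vanishes a.s. once $c\ge (x-\esssup X)^{-1}$, so the minimum is $0$. For $x=\esssup X$ one has $(c(X-x)+1)_+\le1$ a.s. and $(c(X-x)+1)_+\to\I(X=x)$ as $c\to\infty$, so dominated convergence yields $\inf_c\|(c(X-x)+1)_+\|_p=(\P(X=\esssup X))^{1/p}$; I expect the main obstacle to be precisely the bookkeeping in the main case---confirming attainment and the strict inequality $c^*<x$---together with noting that in this last boundary case the stated ``$\min$'' is in fact an infimum approached as $c\to\infty$.
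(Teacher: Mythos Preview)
Your proposal is correct and follows essentially the same route as the paper: both reduce the principal case to the defining minimization for $\QQ_p$ via the substitution $c' = x - 1/c$ (the paper writes it as $c\mapsto 1/c$ after translating to $x=0$) and then invoke the inverse relationship between $\PP_p$ and $\QQ_p$, while the three boundary cases are dispatched by the same direct estimates. Your version is in fact more careful than the paper's in justifying existence of the minimizer $c^*$ and the strict inequality $c^*<x$, and you correctly flag that the stated ``$\min$'' is only an infimum (attained in the limit $c\to\infty$) in the case $x=\esssup X$.
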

\begin{proof}
For the case $p=1$ this result was proved in \cite{MafusalovUryasev14}. Here, we follow the
same idea, but for general $p\in[1,\infty)$. Without loss of generality, we can assume
$x=0$, otherwise consider $X-x$ instead of $X$.

\textit{Case 1:} $\E X < 0$, $\esssup X > 0$. By Lemma~\ref{bpoe lemma} and the
definition of $\QQ_p$ we have
\[
\begin{split}
\PP_p(X,0) &= \min\{\lambda \in(0,1) \mid \QQ_p(X,1-\lambda) = 0 \} \\
&= \min_{\lambda \in(0,1)}\{\lambda \mid \min_{c\in\R}(\tfrac1\lambda \|((X+c)_+\|_p - c)
= 0\} \\
&=\min_{\substack{\lambda \in(0,1)\\c\in\R}}\{\lambda \mid 
\|(X+c)_+\|_p  =\lambda c\}.
\end{split}
\]
Observe that the minimum here can be computed only over $c> 0$ (since for
$c\le 0$ the constraint is obviously not satisfied). Then dividing the both
parts of the equality in the constraint by $c$ we get
\[
\PP_p(X,0) = \min_{c> 0} \|(c^{-1} X+1)_+\|_p,
\]
which is obviously equivalent to~\eqref{bpoe formula}.

\textit{Case 2:} $\E X \ge 0$. We need to show that $\min\limits_{c\ge 0}
\|(cX+1)_+\|_p = 1$. This clearly follows from that for any $c\ge 0$ we have $
\min\limits_{c\ge 0} \|(cX+1)_+\|_p \ge \min\limits_{c\ge 0}\E(cX+1) = 1$.

\textit{Case 3:} $\esssup X = 0$. Now $\|(cX+1)_+\|_p \ge \P(X=0)^{1/p}$ for
any $c\ge 0$, while $\|(c X + 1)_+\|_p \to \P(X=0)^{1/p}$ as $c\to +\infty$.
Hence $\min\limits_{c\ge 0} \|(cX+1)_+\|_p = \P(X=0)^{1/p}$ as claimed.

\textit{Case 4:} $\esssup X < 0$. Similarly, $\|(c X + 1)_+\|_p \to 0$ as $c\to +\infty$.
\end{proof}

From formula \eqref{bpoe formula}, one can easily see the connection between
the monotone Sharpe ratio and the buffered probability for $p=1,2$: for any
$X\in \L^p$
\[
\frac{1}{1+(\msrp(X))^p} = (\PP_p(-X,0))^p.
\]
In particular, if  $X$ is as the return of a portfolio, then a portfolio selection problem where one wants to maximize the
monotone Sharpe ratio of the portfolio return becomes equivalent to the minimization of the
buffered probability that $(-X)$ exceeds $0$, i.e. the buffered probability
of loss. This is a nice (and somewhat unexpected) connection between the
classical portfolio theory and modern developments in risk evaluation!

One can ask a question whether a similar relation between $\PP_p$ and $\msrp$ holds for other values of $p$.
Unfortunately, in general, there seems to be no simple formula connecting
them. It can be shown that they can be represented as the following
optimization problems:
\[
\begin{aligned}
&\msrp(X) = \min_{R\in \L^q_+}\{ \|R-1\|_q \mid \E R = 1,\ \E(RX) = 1\},\\
&\PP_p(X,0) = \min_{R\in \L^q_+}\{ \|R\|_q \mid \E R = 1,\ \E(RX) = 1\},
\end{aligned}
\]
which have the same constraint sets but different objective functions.
The first formula here easily follows from~\eqref{pr1 1}, the second one can
be obtained using the dual representation of CVAR \eqref{cvar dual}.

\subsection{Properties}
In this section we investigate some basic properties of $\PP_p(X,x)$ both in
$X$ and $x$, and discuss its usage in portfolio selection problem. One of
the main points of this section is that buffered probabilities (of loss) can
be used as optimality criteria, similarly to monotone Sharpe ratios (and in
the cases $p\neq1,2$ they are more convenient due to s simpler
representation).

\begin{theorem} Suppose $X\in\Lp$, $x\in\R$ and $p\in[1,\infty)$. Then
$\PP_p(X,x)$ has the following properties. 
\begin{enumerate}
\item The function $x\mapsto \PP_p(X,x)$ is continuous and strictly decreasing on $[\E X,
\esssup(X))$, and non-increasing on the whole $\R$.
\item The function  $X \mapsto \PP_p(X,x)$ is quasi-convex, law invariant, 2nd order monotone, 
continuous with respect to the $\Lp$-norm, and concave with respect to mixtures of distributions.
\item The function $p\mapsto \PP_p(X,x)$ is non-decreasing in $p$.
\end{enumerate}
\end{theorem}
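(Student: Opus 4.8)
The plan is to derive every property from the representation \eqref{bpoe formula}, namely $\PP_p(X,x) = \min_{c\ge 0}\|(c(X-x)+1)_+\|_p$, supplemented where needed by the inversion relation with $\Lp$-CVAR furnished by Lemma~\ref{bpoe lemma}. Several items parallel the properties of $\msrp$ in Theorem~\ref{thm msr prop}, but since the identity $\frac{1}{1+(\msrp(X))^p}=(\PP_p(-X,0))^p$ is only available for $p=1,2$, I would establish them directly for general $p$. Throughout I may assume $x=0$ after replacing $X$ by $X-x$.

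For the behaviour in $x$ (part 1), the global monotonicity is immediate: if $x_1\le x_2$, then $(c(X-x_1)+1)_+\ge (c(X-x_2)+1)_+$ pointwise for every $c\ge0$, so the $\Lp$-norms are ordered and hence so are their minima, giving $\PP_p(X,x_1)\ge\PP_p(X,x_2)$. For continuity and strict monotonicity on $[\E X,\esssup X)$ I would use that here $\PP_p(X,x)=1-\QQ_p^{-1}(X,x)$: by Lemma~\ref{bpoe lemma} the map $\lambda\mapsto\QQ_p(X,\lambda)$ is continuous and strictly increasing on $\{\lambda:\QQ_p(X,\lambda)<\esssup X\}$, so it is a homeomorphism of some $[0,\lambda^*)$ onto $[\E X,\esssup X)$; its inverse is therefore continuous and strictly increasing, and $1$ minus it is continuous and strictly decreasing, matching the boundary value $1$ at $x=\E X$.

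For part 2, law invariance is clear since $\|(c(X-x)+1)_+\|_p$ depends only on the law of $X$. Second-order monotonicity follows because for each $c\ge0$ the integrand $u\mapsto (c(u-x)+1)_+^p$ is increasing and convex, so $X\mapsto \E(c(X-x)+1)_+^p$ preserves the increasing convex order; taking $p$-th roots and then the infimum over $c$ preserves it, which gives $\PP_p(X,x)\le\PP_p(Y,x)$ whenever $X$ precedes $Y$ in that order (the second-order monotonicity appropriate to a loss functional). Concavity in mixtures is equally transparent: for a mixture law the functional $\mu\mapsto\E_\mu(c(\cdot-x)+1)_+^p$ is affine, $t\mapsto t^{1/p}$ is concave, so each $\mu\mapsto\|(c(\cdot-x)+1)_+\|_p$ is concave in $\mu$, and an infimum of concave functions is concave. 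For part 3, monotonicity in $p$ is an instance of Lyapunov's inequality: on a probability space $\|\cdot\|_p\le\|\cdot\|_{p'}$ for $p\le p'$, so each $h_c(X):=\|(c(X-x)+1)_+\|_p$ increases with $p$, and passing to the infimum over $c$ preserves the inequality, giving $\PP_p(X,x)\le\PP_{p'}(X,x)$ (both defined since $\L^{p'}\subset\L^p$).

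The delicate items are quasi-convexity and $\Lp$-continuity. For quasi-convexity the natural route is to identify sublevel sets of $\PP_p$ with those of the coherent, hence convex, risk measure $\QQ_p(\cdot,1-\gamma)$: relating the representation variable to the CVAR variable $c$ exactly as in the proof of \eqref{bpoe formula} yields, for $\gamma\in(0,1)$, the correspondence $\PP_p(X,x)<\gamma\iff\QQ_p(X,1-\gamma)<x$, and the latter set is convex by the dual representation \eqref{cvar dual} (a supremum of linear functionals). The $\le$-sublevel sets are then convex as the intersections $\bigcap_{\gamma'>\gamma}\{X:\PP_p(X,x)<\gamma'\}$, which gives quasi-convexity. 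For continuity, each $h_c$ is $c$-Lipschitz in $\Lp$-norm, so $\PP_p=\inf_c h_c$ is upper semicontinuous; for lower semicontinuity I would note that when $\esssup X>x$ one has $h_c(X)\to\infty$ as $c\to\infty$, so the minimizing $c$ stays bounded uniformly for nearby $X$, and the $c$-Lipschitz bound on this bounded range then yields continuity (the degenerate cases $\esssup X\le x$ being handled separately). The main obstacle is precisely the boundary behaviour at the atom $\{X=\esssup X\}$ in the quasi-convexity argument: when $\P(X=\esssup X)>0$ the value $\PP_p(X,\esssup X)=\P(X=\esssup X)^{1/p}$ is \emph{not} recovered by plain $\le$-inversion of CVAR, which is why one must work with strict sublevel sets and intersect; the a priori bound on the minimizing $c$ needed for lower semicontinuity is the secondary technical point.
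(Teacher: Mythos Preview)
Your proposal is correct and more detailed than what the paper itself offers: the paper omits the proof entirely, referring to \cite{MafusalovUryasev14} for $p=1$ and stating that the general case is similar, with only the remark that property~3 follows from Lyapunov's inequality---exactly the argument you give. Your route through the representation \eqref{bpoe formula} for the pointwise/order properties and through the equivalence $\PP_p(X,x)<\gamma\iff\QQ_p(X,1-\gamma)<x$ for quasi-convexity is the natural one and is presumably what the cited reference does in the case $p=1$; the only caveat is that the $\Lp$-continuity claim, as stated, can fail at degenerate $X$ with $\esssup X\le x$ (e.g.\ $X\equiv 0$, $x=0$, approximated by $X_n\equiv -1/n$), a boundary issue you rightly flag for separate handling and which the paper's statement simply glosses over.
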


For $p=1$, similar results can be found in \cite{MafusalovUryasev14}; the proofs are
similar as well (except property 3, but it obviously
follows from the Lyapunov inequality), so we do not provide them here.

Regarding the second property note that despite $\PP_p(X,x)$ is quasi-convex in $X$, it's not convex
in $X$ as the following simple example shows: consider $X\equiv 2$ and $Y\equiv -1$; then
$\PP((X+Y)/2, \;0) = 1 \not\le \tfrac12 = \tfrac 12 \PP(X,0) +
\tfrac12\PP(Y,0)$.

Also recall that the mixture of two distributions
on $\R$ specified by their distribution functions $F_1(x)$ and $F_2(x)$ is
defined as the distribution $F(x) = \lambda F_1(x) + (1-\lambda)F_2(x)$ for
any fixed $\lambda\in[0,1]$. We write
$X\stackrel{d}{=} \lambda X_1 \oplus (1-\lambda)X_2$ if the distribution of
a random variable $X$ is the mixture of the distributions of $X_1$ and
$X_2$. If $\xi$ is a random variable taking values $1,2$ with probabilities
$\lambda,1-\lambda$ and independent of $X_1,X_2$, then clearly
$X\stackrel{d}{=} X_{\xi}$. Concavity of $\P_p(x,x)$ with respect to
mixtures of distributions means that
$
\PP_p(X, x) \ge \lambda\PP_p(X_1,x) + (1-\lambda)\PP_p(X_2,x).
$

Now let's look in more details on how a simple portfolio selection problem can be
formulated with $\PP_p$. Assume the same setting as in Section~\ref{msr intro}: $R$
is a $(n+1)$-dimensional vector of asset returns, the first asset is
riskless with the rate of return $r$, and the other $n$ assets are risky
with random return in $\Lp$. Let $R_x = \sp xR$ denote the return of a
portfolio $x\in\R^{n+1}$, and $\delta>0$ be a fixed number, a required
expected return premium. Consider the following optimization problem:
\begin{equation}
\begin{aligned}
\text{minimize}\quad& \PP_p(r-R_x, \;0)\text{ over }x\in \R^{n+1}\\
\text{subject to}\quad&
\E(R_x-r) = \delta,\\
& {\textstyle \sum_i} x_i=1.
\end{aligned}
\label{p opt}
\end{equation}
In other words, an investor wants to minimize the buffered probability that
the return of her portfolio will be less than the riskless return subject to
the constraint on the expected return. Denote the vector of adjusted risky
returns $\overline R = (R_1-r,\ldots,R_n-r)$, and the risky part of the
portfolio $\overline x = (x_1,\ldots,x_n)$. Using the representation of
$\PP_p$, the problem becomes
\begin{equation}
\begin{aligned}[c]
\text{minimize}\quad& \E(1-\sp{\overline x}{\overline R})_+^p\text{ over
}\overline x\in\R^n\\
\text{subject to}\quad&
\E \sp{\overline x}{\overline R} \ge 0.
\end{aligned}
\label{p opt 2}
\end{equation}
If we find a solution $\overline x^*$ of this problem, then the optimal portfolio in
problem~\eqref{p opt} can be found as follows:
\[
x^*_i = \frac{\delta \overline x^*_i}{\E \sp{\overline x^*}{\overline R}},
\; i=1,\ldots,n, \qquad x^*_0 = 1 - {\textstyle \sum\limits_{i=1}^n x^*_i}.
\] 
Moreover, observe that the constraint $\E \sp{\overline x}{\overline R} \ge
0$ can be removed in \eqref{p opt 2} since the value of the objective
function is not less than 1 in the case if $\E \sp{\overline x}{\overline R} < 0$, which is not
optimal. Thus, \eqref{p opt 2} becomes an unconstrained problem.

\section{Dynamic problems}

This section illustrates how the developed theory can be used
to give new elegant solutions of dynamic portfolio selection problems
when an investor can continuously trade in the market. The results we
obtain are not entirely new, but their proofs are considerably shorter and simpler
than in the literature.

\subsection{A continuous-time market model and two investment problems}
Suppose there are two assets traded in the market: a riskless asset with
price $B_t$ and a risky asset with price $S_t$ at time $t\in[0,\infty)$. The
time runs continuously. Without loss of generality, we assume
$B_t\equiv 1$. The price of the risky asset is modeled by a geometric
Brownian motion with constant drift $\mu$ and volatility $\sigma$, i.e.
\[
S_t = S_0 \exp\biggl(\sigma W_t + \Bigl(\mu-\frac{\sigma^2}{2} t \Bigr)\biggr), \qquad t\ge 0,
\]
where $W_t$ is a Brownian motion (Wiener process). Without loss of
generality, $S_0=1$. It is well-known that the process $S_t$ is
the unique strong solution of the stochastic differential equation (SDE)
\[
d S_t = S_t(\mu dt + \sigma dW_t).
\]

We consider the following two problems of choosing an optimal investment
strategy in this market model.

\medskip \textit{Problem 1.} Suppose a trader can manage her portfolio
dynamically on a time horizon $[0,T]$. A trading strategy is identified with a
scalar control process $u_t$, which is equal to the amount of money invested
in the risky asset at time $t$. The amount of money $v_t$ is invested in the
riskless asset. The value $X_t^u=u_t+v_t$ of the  portfolio with
the starting value $X_0^u=x_0$ satisfies the controlled SDE
\begin{equation}
dX_t^u = u_t(\mu dt + \sigma dW_t),\qquad X_0^u = x_0.
\label{control SDE}
\end{equation}
This equation is well-known and it expresses the assumption that the trading strategy is self-financing,
i.e. it has no external inflows or outflows of capital. Note that $v_t$
doesn't appear in the equation since it can be uniquely recovered as $v_t =
X_t^u - u_t$. 

To have $X^u$ correctly defined, we'll assume that $u_t$ is predictable with
respect to the filtration generated by $W_t$ and
$\E \int_0^T u_t^2 dt < \infty$. We'll also need to impose the following
mild technical assumption:
\begin{equation}
\E \exp\biggl( \frac{\sigma^2p^2}2 \int_0^T \frac{u_t^2}{(1-X_t^u)^2} dt\biggr) <
\infty.
\label{4.1.2}
\end{equation}
The class of all the processes $u_t$ satisfying these assumptions will be denoted by
$\mathcal{U}$. Actually, it can be shown that \eqref{4.1.2}
can be removed without changing the class of optimal strategies in the
problem formulated below, but to keep
the presentation simple, we will require it to hold.

The problem consists in minimizing the buffered probability of loss by time $T$.
So the goal of the trader is to  solve the following control problem with some fixed $p\in(1,\infty)$:
\begin{equation}
V_1 = \inf_{u\in\mathcal{U}} \PP_p(x_0-X_T^u,0).
\label{v1}
\end{equation}
For $p=2$ this problem is equivalent to the problem of maximization of
the monotone Sharpe ratio $\msrp(X_T^u - x_0)$. Moreover, we'll also show
that the same solution is obtained in the problem of maximization of the
standard Sharpe ratio, $S(X_T^u-x_0)$. Note that we
don't consider the case $p=1$.

From \eqref{control SDE} and \eqref{v1}, it is clear that without loss of generality we can
(and will) assume $x_0=0$. It is also clear that there is no unique solution
of problem \eqref{v1}: if some $u^*$ minimizes $\PP_p(x_0-X_T^u,0)$ then so
does any $u_t = c u_t^*$ with a constant $c>0$. Hence, additional constraints
have to be imposed if one wants to have a unique solution, for example a
constraint on the expected return like $\E X_T^u = x_0 + \delta$. This is similar
to  the standard Markowitz portfolio selection problem, as discussed in
Section~\ref{msr intro}.

\medskip
\textit{Problem 2.} Suppose at time $t=0$ a trader
holds one unit of the risky asset with the starting price $S_0=1$ and wants to sell it better than some goal
price $x\ge1$. The asset is indivisible (e.g. a house) and can be sold only at
once.

A selling strategy is identified with a Markov time of the process $S_t$.
Recall that a random variable $\tau$ with values in $[0,\infty]$ is called a
Markov time if the random event $\{\tau\le t\}$ is in the $\sigma$-algebra
$\sigma(S_r;\ r\le t)$ for any $t\ge 0$. The notion of a stopping time
reflects the idea that no information about the prices in the future can be
used at the moment when the trader decides to sell the asset. The random
event $\{\tau=\infty\}$ is interpreted as the situation when the asset is
never sold, and we set $S_\infty := 0$. We'll see that the optimal strategy
in the problem we formulate below will not sell the asset with positive
probability.

Let $\mathcal{M}$ we denote the class of all Markov times of the process
$S_t$. We consider the following optimal stopping problem for
$p\in(1,\infty)$:
\begin{equation}
V_2 = \inf_{\tau \in\mathcal{M}} \PP_p(x-S_\tau,0),
\label{v2}
\end{equation}
i.e. minimization of the buffered probability to sell worse then for the
goal price $x$. Similarly to Problem 1, in the case $p=2$, it'll be shown
that this problem is equivalent to maximization of the monotone Sharpe ratio
$\msr_2(S_\tau-x)$, and the optimal strategy also maximizes the standard
Sharpe ratio.

\subsection{A brief literature review}
Perhaps, the most interesting thing to notice about the two problems is that they
are ``not standard'' from the point of view of the stochastic
control theory for diffusion processes and Brownian motion. Namely, they
don't directly reduce to solutions of some PDEs, which for
``standard'' problems is possible via the Hamilton--Jacobi--Bellman equation. In
Problems 1 and 2 (and also in the related problems of maximization of the
standard Sharpe ratio), the HJB equation cannot be written because the
objective function is not in the form of the expectation of some functional
of the controlled process, i.e. not $\E F(X_r^u;\ r\le T)$ or $\E F(S_r;\
r\le \tau)$. Hence, another approach is needed to solve them.

Dynamic problems of maximization of the standard Sharpe ratio and related
problems with mean-variance optimality criteria have been studied in the
literature by several authors. We just briefly mention several of them.

Richardson \cite{Richardson89} was, probably, the first who solved a
dynamic portfolio selection problem under a mean-variance criterion (the earlier
paper of White \cite{White74} can also be mentioned); he used ``martingale''
methods. Li and Ng \cite{LiNg00} studied a multi-asset mean-variance selection
problem, which they solved through auxiliary control problems in the
standard setting. A similar approach was also used in the recent papers by
Pedersen and Peskir \cite{PedersenPeskir16,PedersenPeskir17}. The first of
them provides a solution for the optimal selling problem (an analogue of our
Problem 2), the second paper solves the portfolio selection problem (our
Problem 1). There are other results in the literature, a comprehensive
overview can be found in the above-mentioned papers by Pedersen and Peskir,
and also in the paper \cite{CLWZ12}.

It should be also mentioned that a large number of papers study the
so-called problem of time inconsistency of the mean-variance and similar
optimality criteria, which roughly means that at a time $t>t_0$ it turns out
to be not optimal to follow the strategy, which was optimal at time $t_0$.
Such a contradiction doesn't happen in standard control problems for Markov
processes, where the Bellman principle can be applied, but it is quite
typical for non-standard problems. Several approaches to redefine the notion
of an optimal strategy that would take into account time inconsistency are
known: see, for example, the already mentioned papers
\cite{PedersenPeskir16,PedersenPeskir17,CLWZ12} and the references therein.
We will not deal with the issue of time inconsistency (our solutions are
time inconsistent).

Compared to the results in the literature, the solutions of Problems 1 and 2
in the case $p=2$ readily follows from earlier results (e.g. from
\cite{Richardson89,PedersenPeskir16,PedersenPeskir17}); the other cases can
be also studied by previously known methods. Nevertheless, the value of this
paper is in the new approach to solve them through the monotone Sharpe ratio
and buffered probabilities. This approach seems to be simpler than previous
ones (the reader can observe how short the solutions presented below
compared to \cite{PedersenPeskir16,PedersenPeskir17}) and promising
for more general settings.

\subsection{Solution of Problem 1}
\begin{theorem}
The class of optimal control strategies in Problem 1 is given by
\[
u_t^{c} = \frac{\mu}{\sigma^2(p-1)}(c - X_t^{u^c}),
\]
where $c>0$ can an arbitrary constant. The process $Y_t^{u^c} = c -
X_t^{u^c}$ is a geometric Brownian motion satisfying the SDE
\[
\frac{d Y^{u^c}_t}{Y^{u^c}_t} =- \frac{\mu^2}{\sigma^2 (p-1)}  dt -
\frac{\mu}{\sigma(p-1)} dW_t , \qquad Y_0^{u^c} = c.
\]
\end{theorem}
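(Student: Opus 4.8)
The plan is to use the buffered-probability representation~\eqref{bpoe formula} to turn the non-standard criterion into an ordinary terminal-cost stochastic control problem, solve that by the Hamilton--Jacobi--Bellman (HJB) equation together with a verification argument, and then read off the whole family of optimizers.

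\textbf{Reduction.} Taking $x_0=0$ and applying~\eqref{bpoe formula} to the loss $-X_T^u$ gives $\PP_p(-X_T^u,0)=\min_{a\ge0}\|(1-aX_T^u)_+\|_p$. For the strategies of interest, which have $\E X_T^u>0$, the minimizing $a$ is strictly positive, so the substitution $a=1/c$ and the identity $(1-aX_T^u)_+=\tfrac1c(c-X_T^u)_+$ rewrite this as
\[
\PP_p(-X_T^u,0)=\inf_{c>0}\frac1c\bigl(\E(c-X_T^u)_+^p\bigr)^{1/p}.
\]
Interchanging the two infima yields $V_1=\inf_{c>0}\tfrac1c\,\bigl(\inf_{u\in\mathcal U}\E(c-X_T^u)_+^p\bigr)^{1/p}$, so it suffices, for each fixed $c>0$, to solve the inner problem of minimizing $\E(c-X_T^u)_+^p$. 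This is exactly where the free parameter $c$ of the theorem originates.

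\textbf{HJB and the candidate.} Fix $c>0$ and set $Y_t=c-X_t^u$, so $Y_0=c$ and, by~\eqref{control SDE}, $dY_t=-u_t(\mu\,dt+\sigma\,dW_t)$; writing the control as $\pi_t=-u_t$ gives $dY_t=\pi_t(\mu\,dt+\sigma\,dW_t)$. I would look for the value function in the self-similar form $G(t,y)=g(t)(y^+)^p$, forced by the terminal cost $G(T,y)=(y^+)^p$ to satisfy $g(T)=1$. On $\{y>0\}$ the equation $\inf_\pi\{G_t+\pi\mu G_y+\tfrac12\pi^2\sigma^2G_{yy}\}=0$ is a convex quadratic in $\pi$ (as $G_{yy}>0$), minimized at $\pi^*=-\mu y/(\sigma^2(p-1))$; inserting this reduces HJB to the linear ODE $g'=\tfrac{\mu^2p}{2\sigma^2(p-1)}g$, whose solution with $g(T)=1$ is explicit. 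In feedback form $u_t^c=-\pi^*=\tfrac{\mu}{\sigma^2(p-1)}Y_t=\tfrac{\mu}{\sigma^2(p-1)}(c-X_t^{u^c})$, and under this control $Y_t=c-X_t^{u^c}$ solves precisely the geometric Brownian motion SDE in the statement, with candidate minimal value $G(0,c)=g(0)c^p$.

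\textbf{Verification.} \emph{The main obstacle is to justify the guess rigorously}, by showing that $G(t,Y_t)$ is a submartingale for every admissible $u$ and a martingale for $u^c$. The drift of $G(t,Y_t)$ equals the left-hand side of the HJB equation, which is $\ge0$ by the convex-quadratic bound and vanishes along $u^c$; integrating then yields $\E(c-X_T^u)_+^p\ge g(0)c^p=\E(c-X_T^{u^c})_+^p$. Two technical points must be handled. First, $(y^+)^p$ fails to be $C^2$ at $y=0$ when $1<p<2$ while a general $Y_t$ crosses zero; here convexity of $G(t,\cdot)$ rescues the argument, since the Meyer--It\^o (It\^o--Tanaka) formula contributes a non-negative curvature/local-time term that only strengthens the submartingale inequality. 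Second, one must ensure the local-martingale part of $G(t,Y_t)$ is a genuine martingale, so that no mass is lost on passing to the limit in the localization (note that $G\ge0$ makes Fatou's lemma point the \emph{wrong} way); this is exactly the purpose of the exponential-integrability assumption~\eqref{4.1.2}, which supplies the moment bounds guaranteeing uniform integrability. Along $u^c$ everything is explicit, as $Y$ is an honest positive GBM, so the martingale property and finiteness of $g(0)c^p$ are immediate.

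\textbf{The family.} Substituting the inner value back, the outer objective becomes $\tfrac1c(g(0)c^p)^{1/p}=g(0)^{1/p}$, which does not depend on $c$. Hence the minimization over $c$ is degenerate: $V_1=g(0)^{1/p}$ and \emph{every} $c>0$ yields an optimal strategy, namely $u_t^c=\tfrac{\mu}{\sigma^2(p-1)}(c-X_t^{u^c})$ with $Y_t^{u^c}=c-X_t^{u^c}$ the stated geometric Brownian motion. This reproduces the full one-parameter family in the theorem and explains why $c$ is arbitrary.
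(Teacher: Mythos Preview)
Your reduction via the buffered-probability representation matches the paper's opening move, but the core technical step is handled by a genuinely different method. The paper first absorbs the scale parameter into the control using linearity ($cX^u=X^{cu}$), reducing at once to the single inner problem $\min_u\|(1-X_T^u)_+\|_p$; it then drops the positive part by observing that an optimal strategy freezes once $1-X_t^u$ hits zero, and attacks $\min_u\E|1-X_T^u|^p$ by a Girsanov change of measure rather than HJB. Writing $v_t=-u_t/(1-X_t^u)$ and using the stochastic exponential $Z=\mathcal{E}(\sigma p v)$ as a density (this is precisely where condition~\eqref{4.1.2} enters, as Novikov's criterion), the cost becomes $\E^Q\exp\bigl(\int_0^T(\mu p v_s+\tfrac12\sigma^2(p^2-p)v_s^2)\,ds\bigr)$ under the new measure, which is minimized pointwise by the constant $v^*=-\mu/(\sigma^2(p-1))$. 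This sidesteps both the HJB machinery and the regularity issue at $y=0$.

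Your HJB-plus-verification route is correct and more in the spirit of standard stochastic control; handling the $(y^+)^p$ singularity via the Meyer--It\^o formula is the right fix for $1<p<2$. The trade-off is that assumption~\eqref{4.1.2} is custom-built for the Girsanov argument (it \emph{is} Novikov's condition for $Z$), so extracting from it the uniform integrability you need for verification is less immediate than in the paper's approach. Your treatment of the free parameter $c$ --- carrying it through and observing at the end that $\tfrac1c(g(0)c^p)^{1/p}$ is constant in $c$ --- is a clean alternative to the paper's absorb-then-rescale; both explain the one-parameter family of optimizers equally well.
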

\begin{proof}
Assuming $x_0=0$, from the representation of $\PP_p(X,x)$ we have
\begin{equation}
V_1 = \min_{c\ge 0} \min_{u\in\mathcal{U}} \| (1- cX_T^u)_+\|_p =
\min_{u\in\mathcal{U}} \| (1- X_T^u)_+\|_p,
\label{4.1.1}
\end{equation}
where in the second equality we used that the constant $c$ can be included
in the control, since $cX^u = X^{cu}$. Denote $\tilde X_t^u = 1 - X_t^u$, so
that the controlled  process $\tilde X^u$ satisfies the equation
\[
d \tilde X_t^u = -\mu u_t dt - \sigma u_t d W_t, \qquad \tilde X_0^u = 1.
\]
Then 
\begin{equation}
V_1^p = \min_{u\in\mathcal{U}} \E| \tilde X_T^{u}|^p,
\label{4.1.1a}
\end{equation}
where  $(\cdot)_+$ from \eqref{4.1.1} was removed  since it is obvious that as
soon as $\tilde X_t^u$ reaches zero, it is optimal to choose $u\equiv 0$
afterwards, so the process stays at zero.

Let $v_t = v_t(u) = -u_t/\tilde X_t^u$. Then for any $u\in \mathcal{U}$ we
have
\[
\E |\tilde X_T^u|^p = \E \Bigl\{Z_T \exp\Bigl({\textstyle \int_0^T} \bigl(\mu
p v_s + \tfrac12\sigma^2(p^2-p) v_s^2 \bigr)ds\Bigr)\Bigr\},
\]
where $Z$ is the stochastic exponent process $Z= \mathcal{E}(\sigma p v)$. From
Novikov's condition, which holds due to \eqref{4.1.2}, $Z_t$ is a martingale
and  $E Z_T = 1$. By
introducing the new measure $Q$ on the $\sigma$-algebra $\mathcal{F}_T = \sigma(W_t,\
t\le T)$ with the density $dQ = Z_T dP$ we obtain
\[
\E |\tilde X_T^u|^p = \E^Q\Bigl\{ \exp\Bigl({\textstyle \int_0^T} \bigl(\mu
p v_s + \tfrac12\sigma^2(p^2-p) v_s^2 \bigr)ds\Bigr)\Bigr\}.
\]
Clearly, this expression can be minimized by minimizing the integrand for
each $t$, i.e. by
\[
v^*_t = -\frac{\mu}{\sigma^2(p-1)} \text{ for all }t\in[0,T].
\]
Obviously, it satisfies condition \eqref{4.1.2}, so
the corresponding control process 
\[
u_t^* = \frac{\mu}{\sigma^2(p-1)} \tilde X_t^u = \frac{\mu}{\sigma^2(p-1)} (1-X_t^u)
\]
is optimal in problem \eqref{4.1.1a}. Consequently, any control process
$u^c_t = c u_t^*$, $c>0$, will be optimal in \eqref{v1}. Since $X_t^{u^c} = c
X_t^{u^*}$, we obtain the first claim of the theorem. The representation for
$Y_t^{u^c}$ follows from straightforward computations.
\end{proof}

\begin{corollary}
Let $u^* = \frac{\mu}{\sigma^2}(c- X_t^u)$, with some $c>0$, be an optimal
control strategy in problem \eqref{v1} for $p=2$. Then the standard Sharpe
ratio of $X_T^{u^*}$ is equal to its monotone Sharpe ratio,
$S(X_T^{u^*}) = \msr_2(S_T^{u^*})$.

In particular, $u^*$ also maximizes
the standard Sharpe ratio of the return $X_T^u$, i.e. $S(X_T^u) \le
S(X_T^{u^*})$ for any
$u\in\mathcal{U}$.
\end{corollary}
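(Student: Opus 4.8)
Throughout, recall that the risk-free rate is zero, so $S(X)=\E X/\sqrt{\Var X}$, and that taking $Y=X$ in the supremum defining $\msr_2$ gives $\msr_2(X)\ge S(X)$ for every $X$ with $\E X>0$. Hence it suffices to prove the reverse inequality for $X=X_T^{u^*}$. The plan is to compare two representations at $p=2$: the representation~\eqref{msr12 repr}, $\tfrac{1}{1+(\msr_2(X))^2}=\min_{a\in\R}\E(1-aX)_+^2$, and the elementary identity $\tfrac{1}{1+S(X)^2}=\min_{a\in\R}\E(1-aX)^2=1-(\E X)^2/\E X^2$, whose minimizer is $a^*=\E X/\E X^2$. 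Since $(1-aX)_+^2\le(1-aX)^2$ pointwise, the first minimum never exceeds the second (which is just $\msr_2\ge S$ again); the whole point is to show these two minima coincide for the optimal process.

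The key structural input is that, by the preceding theorem, $Y_T^{u^*}=c-X_T^{u^*}$ is the terminal value of a geometric Brownian motion started at $c>0$, so $Y_T^{u^*}>0$ a.s. and thus $X_T^{u^*}<c$ a.s. First I would record a moment identity special to $p=2$: writing $dY/Y=a\,dt+b\,dW$ with $a=-\mu^2/\sigma^2$ and $b=-\mu/\sigma$, one has $a+b^2=0$, so $\E(Y_t^{u^*})^2$ and $\E Y_t^{u^*}$ have the same exponential growth rate and $\E(Y_T^{u^*})^2=c\,\E Y_T^{u^*}$. Substituting $X_T^{u^*}=c-Y_T^{u^*}$ rearranges this to $\E(X_T^{u^*})^2=c\,\E X_T^{u^*}$, whence the minimizer above is exactly $a^*=\E X_T^{u^*}/\E(X_T^{u^*})^2=1/c$.

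At this value the positive part is inactive: $1-a^*X_T^{u^*}=Y_T^{u^*}/c\ge 0$ a.s., so $\E(1-a^*X_T^{u^*})_+^2=\E(1-a^*X_T^{u^*})^2$. Moreover $a^*$ is a critical point of the convex map $\phi(a)=\E(1-aX_T^{u^*})_+^2$, since $\phi'(a^*)=-2\E[X_T^{u^*}(1-a^*X_T^{u^*})_+]=-2\E[X_T^{u^*}(1-a^*X_T^{u^*})]=-2(\E X_T^{u^*}-a^*\E(X_T^{u^*})^2)=0$. By convexity $a^*$ is therefore the global minimizer of $\phi$, so $\min_a\phi(a)=\E(1-a^*X_T^{u^*})^2=\min_a\E(1-aX_T^{u^*})^2$. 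The two representations then give $\tfrac{1}{1+\msr_2(X_T^{u^*})^2}=\tfrac{1}{1+S(X_T^{u^*})^2}$, i.e.\ $\msr_2(X_T^{u^*})=S(X_T^{u^*})$.

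For the final assertion I would combine this with the optimality of $u^*$. For an arbitrary $u\in\mathcal{U}$ one always has $S(X_T^u)\le\msr_2(X_T^u)$, while the identity $\tfrac{1}{1+(\msr_2(X))^2}=(\PP_2(-X,0))^2$ turns Problem~1 (minimization of $\PP_2(-X_T^u,0)$ over $u$, with $x_0=0$) into maximization of $\msr_2(X_T^u)$; hence $\msr_2(X_T^u)\le\msr_2(X_T^{u^*})$. Chaining these with the equality just proved yields $S(X_T^u)\le\msr_2(X_T^{u^*})=S(X_T^{u^*})$. I expect the only delicate point to be the moment identity $\E(X_T^{u^*})^2=c\,\E X_T^{u^*}$, equivalently $a^*=1/c$: this is exactly the feature special to $p=2$ that forces the positive part to be inactive at the optimizer, after which everything reduces to convexity and the first-order condition.
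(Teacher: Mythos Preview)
Your proof is correct and takes a genuinely different route from the paper. The paper argues by contradiction via market completeness and no-arbitrage: if some $Y\le X_T^{u^*}$ had $S(Y)>S(X_T^{u^*})$, one could replicate $Y$ from a strictly negative initial capital $y_0<0$, and the shifted process $\tilde X_T=Y-y_0$ (which is again the terminal value of an admissible strategy starting from $0$) would have strictly larger Sharpe ratio, hence strictly larger monotone Sharpe ratio than $X_T^{u^*}$, contradicting the optimality of $u^*$. Your argument is instead purely computational: you exploit the explicit GBM structure of $Y_T^{u^*}=c-X_T^{u^*}$, derive the moment identity $\E(X_T^{u^*})^2=c\,\E X_T^{u^*}$ from the coincidence $a+b^2=0$ of the drift and volatility at $p=2$, and then show that at the quadratic minimizer $a^*=1/c$ the positive part in representation~\eqref{msr12 repr} is inactive, so the two minima coincide by convexity and the first-order condition.

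The trade-off: the paper's approach is conceptual and does not depend on the explicit moments of $X_T^{u^*}$; it would transfer to any complete no-arbitrage model in which the $\msr_2$-optimizer can be identified. Your approach is entirely self-contained and makes no appeal to replication or pricing theory, but it hinges on the distributional identity $\E(Y_T^{u^*})^2=c\,\E Y_T^{u^*}$ that is special to this GBM at $p=2$; on the other hand it makes transparent \emph{why} the equality $S=\msr_2$ holds for precisely this control. Both arguments then deduce the final assertion $S(X_T^u)\le S(X_T^{u^*})$ in the same way, chaining $S\le\msr_2$ with the optimality of $u^*$ for $\msr_2$ and the equality just established.
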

\begin{proof}
Suppose there is $Y\in \L^2$ such that $Y\le X_T^{u^*}$ and
$S(Y)> S(X_T^{u^*})$. It is well-known that the market model we consider is
no-arbitrage and complete. This implies that there exists $y_0<0$ and a
control $u_t$ such that $X_0^{u} =y_0$ and $X_T^u = Y$. The initial capital
$y_0$ is negative, because otherwise an arbitrage opportunity can be
constructed. But then the capital process $\tilde X_t = X_t^{u}-y_0$ would have a
higher Sharpe ratio than $Y$ and hence a higher monotone Sharpe ratio than
$X^{u*}_T$. A contradiction. This proves the first claim of the corollary,
and the second one obviously follows from it.
\end{proof}

\subsection{Solution of Problem 2}
We'll assume that $x\ge 1$, $\mu\in \R$, $\sigma>0$, $p>1$ are fixed throughout and use
the following auxiliary notation:
\[
\gamma = \frac{2\mu}{\sigma^2}, \qquad C(b) =
\biggl(\frac{b}{1+\frac{x}{b-x}(1-b^{1-\gamma})^{\frac1{p-1}}}-x\biggr)^{-1}\text{
for } b\in[x,\infty).
\]
\begin{theorem}
\label{v2 thm}
The optimal selling time $\tau^*$ in problem \eqref{v2} is as follows.
\begin{enumerate}[leftmargin=*,itemsep=0mm,topsep=1mm]
\item If $\mu\le 0$, then $\tau^*$ can be any Markov time:
$\PP_p(x-S_\tau,0)=1$ for any $\tau\in\mathcal{M}$.

\item If $\mu\ge \frac{\sigma^2}{2}$, then $S_t$ reaches any level $x'> x$ with
probability $1$ and any stopping time of the form
$\tau^* = \inf\{t\ge 0 : S_t = x'\}$ is optimal.

\item If $0<\mu<\frac{\sigma^2}{2}$, then the optimal stopping time is
\[
\tau^* = \inf\{t\ge 0 : S_t \ge b^*\},
\]
where $b^*\in[x,\infty)$ is the point of minimum of the  function
\[
f(b) = ((1+C(b)(x-b))^p b^{\gamma-1} + (1+C(b)x)^p (1-b^{\gamma-1}), \qquad
b\in[x,\infty),
\]
and we set $\tau^* = +\infty$ on the random event $\{S_t < b\text{ for all }
t\ge 0\}$.
\end{enumerate}
\end{theorem}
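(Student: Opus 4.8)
The plan is to exploit representation \eqref{bpoe formula} of the buffered probability to turn the non-standard criterion in \eqref{v2} into an ordinary optimal-stopping problem. Taking $X=x-S_\tau$ and recalling $S_0=1$, formula \eqref{bpoe formula} gives
\[
\PP_p(x-S_\tau,0)^p=\min_{c\ge0}\E\bigl(1+c(x-S_\tau)\bigr)_+^p .
\]
Since both $\inf_\tau$ and $\min_c$ are infima they commute, so writing $G_c(s)=(1+c(x-s))_+^p$ we obtain
\[
V_2^p=\inf_{\tau\in\mathcal M}\min_{c\ge0}\E G_c(S_\tau)=\min_{c\ge0}\inf_{\tau\in\mathcal M}\E G_c(S_\tau).
\]
For each fixed $c$ the inner problem is a classical optimal-stopping problem for the diffusion $S$, and the bounded reward $G_c$ (note $0\le G_c\le(1+cx)^p$) removes all integrability worries.

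Two of the three cases are then immediate. If $\mu\le0$, then $S_t=e^{\mu t}\mathcal E(\sigma W)_t$ is a nonnegative supermartingale, so optional stopping yields $\E S_\tau\le S_0=1\le x$ and hence $\E(x-S_\tau)\ge0$ for every $\tau$; by Case~2 in the proof of \eqref{bpoe formula} this forces $\PP_p(x-S_\tau,0)=1$ identically, which is part~1. If $\mu\ge\sigma^2/2$, the log-price $\log S_t=\sigma W_t+(\mu-\tfrac{\sigma^2}2)t$ has nonnegative drift and therefore reaches every level a.s.; stopping at $\tau=\inf\{t:S_t=x'\}$ for any $x'>x$ makes $x-S_\tau=x-x'$ a negative constant, whose buffered probability is $0$ because its essential supremum lies below the threshold $0$. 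As $\PP_p\ge0$, this is optimal and gives part~2.

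For the main case $0<\mu<\sigma^2/2$ I would solve the inner stopping problem explicitly. The generator $\mathcal L=\tfrac12\sigma^2s^2\partial_{ss}+\mu s\partial_s$ has harmonic functions spanned by $1$ and $s^{1-\gamma}$, where $\gamma=2\mu/\sigma^2\in(0,1)$; using $s^{1-\gamma}$ as scale function and sending the lower barrier to $0$ gives the hitting probability $\P(\tau_b<\infty)=b^{\gamma-1}$ for the threshold time $\tau_b=\inf\{t:S_t\ge b\}$, while $S_\infty=0$ on $\{\tau_b=\infty\}$. Thus $S_{\tau_b}$ takes the values $b$ and $0$ with probabilities $b^{\gamma-1}$ and $1-b^{\gamma-1}$, so
\[
\E G_c(S_{\tau_b})=b^{\gamma-1}\bigl(1+c(x-b)\bigr)_+^p+(1-b^{\gamma-1})(1+cx)^p=:H(c,b).
\]
Minimizing $H(\cdot,b)$ over $c\ge0$ through its first-order condition produces the explicit minimizer $c=C(b)$ and the value $f(b)=H(C(b),b)$ of the theorem; since $V_2^p=\inf_{c,b}H(c,b)=\inf_{b\ge x}f(b)$ once threshold times are known to be optimal, minimizing $f$ over $[x,\infty)$ yields $b^*$ and the candidate $\tau^*=\tau_{b^*}$.

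The hard part is to justify that threshold times are optimal, i.e. that $\inf_\tau\E G_c(S_\tau)=\inf_bH(c,b)$ for each $c$. I would prove this by a verification argument: for fixed $c$ build the candidate value function $v_c$ equal to $G_c$ on $[b_c,\infty)$ and of the harmonic form $A+Bs^{1-\gamma}$ on $(0,b_c)$, fixing $A=G_c(0)=(1+cx)^p$ from the behaviour of $S$ near $0$ and determining $B$ and $b_c$ by continuous and smooth fit at $b_c$; a short computation then gives $v_c(1)=H(c,b_c)$. One checks that $v_c\le G_c$ everywhere and that $\mathcal L v_c\ge0$ on the stopping region, so that $v_c(S_t)$ is a submartingale, and Itô's formula with optional stopping (legitimate since $v_c$ is bounded) gives $\E G_c(S_\tau)\ge\E v_c(S_\tau)\ge v_c(1)$ for all $\tau$, with equality at $\tau_{b_c}$. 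I expect the genuinely delicate points to be verifying the global inequality $v_c\le G_c$ and the subharmonicity $\mathcal L G_c\ge0$ on $[b_c,\infty)$ — equivalently, that the smooth-fit threshold satisfies $b_c\ge x$ and falls in the convex part of $G_c$ — together with the bookkeeping on the event $\{\tau=\infty\}$, where $S_\infty=0$.
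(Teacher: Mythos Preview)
Your approach is essentially the same as the paper's: both start from the representation \eqref{bpoe formula}, interchange the infima over $c$ and $\tau$, and in the main case reduce the inner stopping problem to threshold times, compute the two-point law of $S_{\tau_b}$, and then minimize the resulting function $H(c,b)$ over $c$ first (giving $C(b)$) and then over $b$. The only noteworthy differences are (i) for $\mu\le0$ the paper argues directly that $(1+c(x-S_t))_+^p$ is a submartingale rather than using the supermartingale property of $S$ as you do --- both work; and (ii) for the threshold structure in the case $0<\mu<\sigma^2/2$ the paper simply invokes the general theory of optimal stopping for Markov processes (Peskir--Shiryaev), whereas you propose a hands-on verification via smooth fit and a subharmonic candidate. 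Your route is more self-contained but involves the checks you flag as delicate; the paper's citation sidesteps them.
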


Observe that if $0<\mu<\frac{\sigma^2}{2}$, i.e. $\gamma\in(0,1)$, then the
function $f(b)$ attains its minimum on $[x,\infty)$, since it is continuous
with the limit values $f(x)=f(\infty)=1$.

\begin{proof}
From the representation for $\PP_p$ we have
\[
V_2^p = \inf_{c\ge 0} \inf_{\tau\in\mathcal{M}}  \E|(1  +c(x-S_\tau))_+|^p.
\]
Let $Y_t^c = 1  +c(x-S_\tau)$. Observe that if $\mu\le 0$, then $Y_t^c$ is a
submartingale for any $c\ge 0$, and so by Jensen's inequality $|(Y_t)_+)|^p$ is a
submartingale as well. Hence for any $\tau\in\mathcal{M}$ we have
$\E|(1+c(x-S_\tau))_+|^p\ge 1$, and then $V_2=1$.

If $\mu \ge \frac{\sigma^2}{2}$, then from the explicit
representation $S_t = \exp(\sigma W_t + (\mu - \frac{\sigma^2}{2})t)$ one
can see that $S_t$ reaches any level
$x'\ge 1$ with probability 1 (as the Brownian motion $W_t$ with non-negative drift
does so). Then for any $x'>x$ we have $\PP_p(x-S_{\tau_{x'}},0)=0$, where
$\tau_{x'}$ is the first moment of reaching $x'$.

In the case $\mu\in (0,\frac{\sigma^2}2)$, for any $c\ge 0$, consider the
optimal stopping problem
\[
V_{2,c} = \inf_{\tau\in\mathcal{M}}  \E|(1  +c(x-S_\tau))_+|^p.
\]
This is an optimal stopping problem for a Markov process $S_t$. From the
general theory (see e.g. \cite{PeskirShiryaev06}) it is well known that the optimal stopping time here is of the
threshold type:
\[
\tau^*_c = \inf\{t\ge 0 : S_t \ge b_c\},
\]
where $b_c\in [x, x+\frac1c]$ is some optimal level, which has to be found. Then the distribution
of $S_{\tau_c^*}$ is binomial: it assumes only two values $b_c$ and $0$ with
probabilities $p_c$ and $1-p_c$, where $p_c=b_c^{\gamma-1}$ as can be easily
found from the general formulas for  boundary crossing probabilities for a
Brownian motion with drift. Consequently,
\[
V_2^p = \inf_{b\ge x} \inf_{c\le \frac1{(b-x)}} \Bigl((1+c(x-b))^p b^{\gamma-1} + (1+cx)^p (1-b^{\gamma-1})\Bigr).
\]
It is straightforward to find that for any $b\ge x$ the optimal $c^*(b)$ is given
by $c^*(b) =C(b)$, which proves the claim of the theorem.
\end{proof}

\begin{corollary}
Assume $\mu\in(0,\frac{\sigma^2}{2})$ and $p=2$. Let $\tau^*$ denote the
optimal stopping time from Theorem~\ref{v2 thm}. 
Then the standard Sharpe ratio of $S_{\tau^*}-x$ is equal to its monotone
Sharpe ratio, $S(S_{\tau^*}-x) = \msr_2(S_{\tau^*}-x)$. In particular,
$\tau^*$ also maximizes the standard Sharpe ratio of $S_{\tau}-x$, i.e.
$S(S_\tau-x) \le S(S_{\tau^*}-x)$ for any $\tau\in\mathcal{M}$.

Moreover, in this case the optimal
threshold $b^*$ can be found as the point of maximum
of the function
\[
g(b) = \frac{b^{\gamma}-x}{b^{\frac{\gamma+1}{2}}(1-b^{\gamma-1})^{\frac12}}.
\]
\end{corollary}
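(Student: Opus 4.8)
The plan is to reduce all three assertions to a single sufficient condition for equality of the standard and monotone Sharpe ratios on two-point distributions, together with the identity $\frac{1}{1+(\msr_2(X))^2}=(\PP_2(-X,0))^2$ established after \eqref{bpoe formula}. That identity shows that minimizing $\PP_2(x-S_\tau,0)$ over $\tau\in\mathcal{M}$ is the same as maximizing $\msr_2(S_\tau-x)$; hence the optimal $\tau^*$ produced by Theorem~\ref{v2 thm} maximizes $\msr_2(S_\tau-x)$, and this is the fact I will use repeatedly.

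First I would record the following sufficient condition: if $X\in\L^2$ has $\E X>0$, $\P(X<0)\neq0$ and $X\le \E X^2/\E X$ a.s., then $S(X)=\msr_2(X)$. By scaling invariance I may normalize $\E X=1$, so the hypothesis reads $X\le \E X^2=1+\sigma_2(X)^2$. From the proof of Theorem~\ref{th msr repr}, $(\msr_2(X))^{-1}=\sup\{\E(RX)\mid R\in\mathcal{R},\ R\le \E(RX)\ \as\}$ with $\mathcal{R}=\{R\in\L^2\mid \E R=0,\ \|R\|_2\le1\}$, whereas by Lemma~\ref{sp repr} the supremum of $\E(RX)$ over $\mathcal{R}$ without the extra constraint equals $\sigma_2(X)=(S(X))^{-1}$ and is attained at $R^*=(X-1)/\sigma_2(X)$. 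The hypothesis is exactly $R^*\le \sigma_2(X)=\E(R^*X)$, so $R^*$ is admissible in the constrained problem; the two suprema therefore coincide and $\msr_2(X)=S(X)$.

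Next I would apply this to $X^*=S_{\tau^*}-x$. By Theorem~\ref{v2 thm}, $\tau^*=\tau_{b^*}$ is a threshold time, so $S_{\tau^*}$ takes only the values $b^*$ (with probability $(b^*)^{\gamma-1}$) and $0$; thus $X^*$ takes the values $v_1=b^*-x\ge0$ and $-x<0$, with $\E X^*>0$ and $\P(X^*<0)>0$. Writing $p=(b^*)^{\gamma-1}\in(0,1)$, a direct computation gives $\E\bigl(X^*(X^*-v_1)\bigr)=(1-p)\,x\,b^*\ge0$, i.e. $v_1\le \E (X^*)^2/\E X^*$, while the other value $-x<0$ satisfies the bound trivially; the sufficient condition applies and yields $S(X^*)=\msr_2(X^*)$, which is the first assertion. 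The second assertion is then immediate from the chain
\[
S(S_\tau-x)\le \msr_2(S_\tau-x)\le \msr_2(S_{\tau^*}-x)=S(S_{\tau^*}-x),
\]
valid for every $\tau\in\mathcal{M}$: the first inequality is $\msr_2\ge S$, the second is optimality of $\tau^*$ for $\msr_2$, and the equality is the first assertion.

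Finally, for the formula for $b^*$ I would restrict attention to threshold times $\tau_b=\inf\{t\ge0:S_t\ge b\}$, $b\in[x,\infty)$. Each $S_{\tau_b}$ is two-point ($b$ with probability $b^{\gamma-1}$, and $0$), so $S_{\tau_b}-x$ has mean $b^\gamma-x$ and variance $p(1-p)b^2$ (the two outcomes differ by $b$), whence deviation $\sigma_2=b^{(\gamma+1)/2}(1-b^{\gamma-1})^{1/2}$; hence $S(S_{\tau_b}-x)=g(b)$, and by the sufficient condition $\msr_2(S_{\tau_b}-x)=g(b)$ as well. Since Theorem~\ref{v2 thm} guarantees that the $\msr_2$-maximizing (equivalently $\PP_2$-minimizing) stopping time is a threshold, $b^*$ must maximize $g$ on $[x,\infty)$; equivalently, the quantity $f(b)$ minimized there equals $1/(1+g(b)^2)$, so minimizing $f$ and maximizing $g$ are the same problem. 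The only delicate point is the sign bookkeeping in the sufficient condition — confirming that the maximal value is the binding one and that $\E X^*>0$ holds at the optimum — which is where I would be most careful; the remaining steps are direct specializations of results already proved.
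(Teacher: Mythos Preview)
Your proof is correct and takes a genuinely different route from the paper. The paper handles the first assertion by a direct finite-dimensional check: since $S_{\tau^*}$ is two-valued, any $Y\le S_{\tau^*}-x$ measurable with respect to $S_{\tau^*}$ is again two-valued with $y_1\le b^*-x$, $y_2\le -x$, and one verifies by hand that $S(Y)$ is maximized at $(y_1,y_2)=(b^*-x,-x)$. You instead extract from the dual representation inside the proof of Theorem~\ref{th msr repr} a clean sufficient condition, namely $X\le \E X^2/\E X$ a.s., for $S(X)=\msr_2(X)$, and then check it for $X^*=S_{\tau^*}-x$ by the one-line identity $\E\bigl(X^*(X^*-v_1)\bigr)=(1-p)\,x\,b^*\ge 0$. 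This is more conceptual and yields a reusable criterion valid for every threshold time $\tau_b$, not only the optimal one; the paper's argument is more elementary and self-contained. Both approaches compute $S(S_{\tau_b}-x)=g(b)$ identically and deduce that $b^*$ maximizes $g$ from the optimality of $\tau^*$ over thresholds.

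Two small remarks. First, your flagged point ``$\E X^*>0$ at the optimum'' is indeed fine: for $b>x^{1/\gamma}$ one has $\E(S_{\tau_b}-x)=b^\gamma-x>0$, hence $\PP_2(x-S_{\tau_b},0)<1$, so the $\PP_2$-minimizer $\tau^*$ must have positive mean. Second, when you write ``by the sufficient condition $\msr_2(S_{\tau_b}-x)=g(b)$ as well'' for general $b$, you are implicitly re-using the computation $\E\bigl(X(X-v_1)\bigr)=(1-p)xb\ge 0$ with $b$ in place of $b^*$; this is valid, but it would be cleaner to say so explicitly, or alternatively to route the last step through the standard Sharpe ratio (as the paper does), since the second assertion already gives you that $\tau^*$ maximizes $S(S_\tau-x)$ over all $\tau\in\mathcal{M}$.
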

\begin{proof}
Suppose $Y\le S_{\tau^*} -x$. As shown above, it is enough to consider only $Y$
which are measurable with respect to the $\sigma$-algebra generated by
the random variable $S_{\tau^*}$. Since $S_{\tau^*}$ has a binomial
distribution, then $Y$ should also have a binomial distribution, assuming
values $y_1 \le b^*-x$ and $y_2\le-x$ with the same probabilities $(b^*)^{\gamma-1}$ and
$1-(b^*)^{\gamma-1}$ as $S_{\tau^*}$ assumes the values $b^*$ and 0. Using
this, it is now not difficult to see that $S(Y) \le S(S_{\tau^*}-x)$, which
proves the first claim.

The second claim follows from that for any stopping time of the form $\tau_b
= \{t\ge 0: S_t = b\}$, $b\in[x,\infty)$ we have $S(S_{\tau_b} - x) = g(b)$.
\end{proof}

\appendix
\section{Appendix}
This appendix just reminds some facts from convex optimization and related results which were used
in the paper.

\subsection{Duality in optimization}

Let $\Z$ be a topological vector space and $f(z)$ a real-valued function on $\Z$. 
Consider the optimization problem
\begin{equation}
\text{minimize }f(z) \text{ over }z\in \mathcal{Z}.
\label{opt problem}
\end{equation}
A powerful method to analyze such an optimization problem consists in
considering its dual problem. To formulate it, suppose  that $f(z)$ can be
represented in the form $f(z) = F(z,0)$ for all
$z\in\mathcal{Z}$, where $F(z,a)\colon \mathcal{Z}\times\mathcal{A}\to \R$
is some function, and $\mathcal{A}$ is another topological vector space (a convenient
choice of $F$ and $\mathcal{A}$ plays an important role).

Let $\mathcal{A}^*$ denote the topological dual of $\mathcal{A}$. Define the Lagrangian $L\colon \mathcal{Z}\times\mathcal{A}^*\to \Re$ and the
dual objective function $g\colon \mathcal{A}^* \to \Re$ by
\[
L(z,u) = \inf_{a\in \mathcal{A}} \{F(z,a) + \sp au\}, \qquad
g(u) = \inf_{z\in \mathcal{Z}} L(z,u).
\]
Then the dual problem is formulated as the optimization problem
\[
\text{maximize } g(u) \text{ over }u\in \mathcal{A}^*.
\]
If we denote by $V_P$ and $V_D$ the optimal values of the primal and dual
problems respectively (i.e. the infimum of $f(z)$ and the supremum of $g(u)$
respectively), then it is easy too see that $V_P \ge V_D$ always.
 
We are generally interested in the case when the strong duality takes
place, i.e. $V_P=V_D$, or, explicitly,
\begin{equation}
\min_{z\in\mathcal{Z}} f(z) = \max_{u\in\mathcal{A}^*} g(u).
\label{duality}
\end{equation}

Introduce the optimal value function $\phi(a) =
\inf\limits_{z\in\mathcal{Z}} F(z,a)$. The following theorem provides a
sufficient condition for the strong duality \eqref{duality} (see Theorem 7 in \cite{Rockafellar74}).
\begin{theorem} 
\label{th duality}
Suppose $F$ is convex in $(z,a)$
and $\phi(0) = \liminf\limits_{a\to 0} \phi(a)$. Then \eqref{duality} holds.
\end{theorem}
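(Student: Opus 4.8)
The plan is to move the whole question onto the \emph{optimal value function} $\phi(a)=\inf_{z\in\Z}F(z,a)$ and its conjugate, turning strong duality into a single scalar identity at the origin. First I would observe that $\phi$ is convex on $\mathcal{A}$: partial minimization of the jointly convex function $F$ in the $z$-variable preserves convexity in $a$. This is the only role played by the hypothesis that $F$ is convex in $(z,a)$, and it is routine.

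Next I would rewrite the dual value through $\phi$. Interchanging the two infima in $g(u)=\inf_z\inf_a\{F(z,a)+\sp au\}$ gives
\[
g(u)=\inf_{a\in\mathcal{A}}\{\phi(a)+\sp au\}=-\phi^*(-u),
\]
where $\phi^*(w)=\sup_{a}\{\sp aw-\phi(a)\}$ is the conjugate formed with respect to the pairing of $\mathcal{A}$ and $\mathcal{A}^*$. Hence the dual optimal value is
\[
V_D=\sup_{u\in\mathcal{A}^*}g(u)=\sup_{u}\bigl(-\phi^*(-u)\bigr)=\phi^{**}(0),
\]
the biconjugate of $\phi$ at the origin, while the primal value is simply $V_P=\inf_z F(z,0)=\phi(0)$. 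Thus the assertion $V_P=V_D$ reduces to the identity $\phi(0)=\phi^{**}(0)$.

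The core step is then the Fenchel--Moreau theorem, which for a convex $\phi$ identifies the biconjugate with the lower semicontinuous convex hull, $\phi^{**}=\operatorname{cl}\phi$, together with the elementary fact that for convex functions this hull is computed pointwise by $(\operatorname{cl}\phi)(a_0)=\liminf_{a\to a_0}\phi(a)$. Evaluating at $a_0=0$, the hypothesis $\phi(0)=\liminf_{a\to0}\phi(a)$ says exactly that $\phi$ is lower semicontinuous at the origin, so $\phi^{**}(0)=\phi(0)$ and the values coincide. The attainment of the dual optimum --- the \emph{max} in \eqref{duality} --- is the finer point: a maximizer exists precisely when $\partial\phi(0)\neq\emptyset$, which is guaranteed whenever $\phi$ is in fact continuous at $0$; this is how the hypothesis is discharged in the applications in the paper, where a Lipschitz-type estimate on $\phi$ is produced by hand.

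I expect the genuine difficulty to lie not in the convex-analytic skeleton above but in the functional-analytic bookkeeping in infinite dimensions: one must carry out the conjugation consistently with respect to the chosen dual pair $(\mathcal{A},\mathcal{A}^*)$, and exclude the degenerate improper case in which $\phi$ takes the value $-\infty$, for which $\phi^{**}$ need not equal $\operatorname{cl}\phi$. The $\liminf$ condition at the origin is precisely what rules this degeneracy out there, so the cleanest route is to invoke Fenchel--Moreau directly in the dual-pair setting and reduce everything to verifying convexity of $\phi$ together with its lower semicontinuity at $0$.
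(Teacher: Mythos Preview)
The paper does not give its own proof of this theorem; it simply quotes the result from Rockafellar's monograph (Theorem~7 in \cite{Rockafellar74}). Your sketch is essentially the standard argument that Rockafellar gives there: reduce $V_P=V_D$ to the scalar identity $\phi(0)=\phi^{**}(0)$ by computing $g(u)=-\phi^*(-u)$, then invoke Fenchel--Moreau for the convex function $\phi$ and identify the $\liminf$ hypothesis with lower semicontinuity of $\phi$ at the origin. So your approach coincides with the reference the paper defers to.

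Your caveat about the word ``max'' in \eqref{duality} is well taken: the $\liminf$ hypothesis alone yields equality of values but not, in general, dual attainment, which needs $\partial\phi(0)\neq\emptyset$ and hence something like continuity of $\phi$ at $0$. You correctly observe that in the paper's actual applications of this theorem (the proof of Theorem~\ref{th msr repr}) a Lipschitz estimate on $\phi$ is established by hand, so continuity---and therefore dual attainment---is available there. The statement in the appendix is thus slightly informal in writing $\min/\max$, but your reading of how it is discharged in practice is accurate.
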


Let us consider a particular case of problem \eqref{opt problem} which
includes constraints in the form of equalities and inequalities. Assume
that $\mathcal{Z} = \Lp$ for some $p\in[1,\infty)$ and two functions
$h_i\colon \Lp\to \L^{r_i}(\R^{n_i})$, $i=1,2$ are given (the spaces
$\L^{p}$ and $\L^{r_i}$ are not necessarily defined on the same probability
space). Consider the problem
\[
\begin{aligned}
\text{minimize}\quad& f(z) \text{ over } z\in \L^p\\
\text{subject to}\quad& g(z) \le 0\text{ a.s.}\\
&h(z) = 0 \text{ a.s.}
\end{aligned}
\]
This problem can be formulated as a particular case of the above abstract
setting by defining
\[
F(z,a_1,a_2) =
\begin{cases}
f(z), &\text{if }g(z) \le a_1\text{ and }h(z) = a_2\text{ a.s.},\\
+\infty, &\text{otherwise}.
\end{cases}
\]
The Lagrangian of this problem is
\begin{align*}
L(z,u_1,u_2) &= \inf_{a_1,a_2} \bigr\{F(z,a_1,a_2) + \sp{a_1}{u_1} + \sp{a_2}{u_2}\bigr\} \\ &=
\begin{cases}
f(z) + \sp {g(z)}{u_1} + \sp {h(z)}{u_2}, &\text{if } u_1\ge0\text{ a.s.},\\
-\infty, &\text{otherwise},
\end{cases}
\end{align*}
where we denote $\sp au = \E(\sum_i a_i u_i)$.

So the dual objective function
\[
g(u,v) = \inf_{z \in L^p} \{f(z) + \sp {g(z)}u + \sp {h(z)}v\}
\qquad\text{for } u\ge0\text{ a.s.},\\
\]
and the dual optimization problem
\begin{align*}
\text{maximize}\quad& g(u,v) \text{ over }u \in L^{r'},\; v\in \L^{w'}\\
\text{subject to}\quad& u\ge 0.
\end{align*}
The strong duality equality:
\[
\min_{z} \{f(z) \mid g(z) \le 0,\; h(z) = 0\} = \max_{u,v} \{g(u,v) \mid u\ge 0\}
\]

\subsection{The minimax theorem}
\begin{theorem}[Sion's minimax theorem, Corollary 3.3 in \cite{Sion58}] 
\label{sion theorem}
Suppose $X,Y$ are convex spaces such that one of them is compact, and $f(x,y)$
is a function on $X\times Y$, such that $x\mapsto f(x,y)$ is quasi-concave
and u.s.c. for each fixed $y$ and $y\mapsto f(x,y)$ is quasi-convex
and l.s.c. for each fixed $x$. Then 
\[
\adjustlimits\sup_{x\in X}\inf_{y\in Y} f(x,y) =  \adjustlimits\inf_{y\in Y} \sup_{x\in X} f(x,y).
\]

\end{theorem}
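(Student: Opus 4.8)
The plan is to prove only the nontrivial inequality: the relation $\sup_x\inf_y f\le\inf_y\sup_x f$ holds with no hypotheses, since for arbitrary $x_0,y_0$ one has $\inf_y f(x_0,y)\le f(x_0,y_0)\le\sup_x f(x,y_0)$, and taking $\sup_{x_0}$ then $\inf_{y_0}$ gives it. So it remains to show $\inf_y\sup_x f\le\sup_x\inf_y f=:v^*$. I would first treat the case in which the compact space is $Y$, the space over which the inner infimum is taken; the other case reduces to this one at the end. The idea is to reduce the inner problem to finitely many points of $X$ by a covering argument. Suppose toward a contradiction that $\inf_y\sup_x f>v^*$ and fix $c$ with $v^*<c<\inf_y\sup_x f$. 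Then for every $y$ there is $x$ with $f(x,y)>c$, so the sets $U_x=\{y\in Y: f(x,y)>c\}$ cover $Y$. Each $U_x$ is open because $f(x,\cdot)$ is lower semicontinuous, so compactness of $Y$ yields $x_1,\dots,x_n$ with $Y=U_{x_1}\cup\cdots\cup U_{x_n}$, i.e. $\max_{1\le i\le n} f(x_i,y)>c$ for all $y$. As $y\mapsto\max_i f(x_i,y)$ is lower semicontinuous on the compact set $Y$, it attains its minimum, so in fact $m:=\min_y\max_i f(x_i,y)>c>v^*$.

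A contradiction will follow once I produce a point $x_0$ in the convex hull of $x_1,\dots,x_n$ with $\inf_y f(x_0,y)\ge m$, since then $v^*\ge\inf_y f(x_0,y)\ge m>v^*$. The heart of the argument is therefore the finite lemma
\[
\sup_{x\in\mathrm{conv}\{x_1,\dots,x_n\}}\inf_y f(x,y)\ge\min_y\max_{1\le i\le n} f(x_i,y),
\]
which I would prove by induction on $n$, the case $n=1$ being trivial and the inductive step resting entirely on the two-point case.

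For the two-point case I must show $\sup_{x\in[x_1,x_2]}\inf_y f(x,y)\ge m$, where now $m=\min_y\max(f(x_1,y),f(x_2,y))$. Writing $x_t=(1-t)x_1+tx_2$ and $\phi(t)=\inf_y f(x_t,y)$, note that $\phi$ is upper semicontinuous, being an infimum of the upper semicontinuous maps $t\mapsto f(x_t,y)$, hence it attains its maximum on $[0,1]$. Assume for contradiction $\max_t\phi(t)<m$ and pick $\beta$ with $\max_t\phi(t)<\beta<m$. The level sets $C_t=\{y:f(x_t,y)\le\beta\}$ are then nonempty (because $\phi(t)<\beta$), closed and convex (lower semicontinuity and quasi-convexity in $y$), hence compact; moreover $C_0\cap C_1=\emptyset$ since $\beta<m\le\max(f(x_1,y),f(x_2,y))$ for every $y$. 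Quasi-concavity in $x$ gives $f(x_t,y)\ge\min(f(x_1,y),f(x_2,y))$, so any $y\in C_t$ satisfies $\min(f(x_1,y),f(x_2,y))\le\beta$, i.e. $C_t\subseteq C_0\cup C_1$. Being convex, $C_t$ is connected, so it lies entirely in one of the two disjoint closed sets $C_0,C_1$; this partitions $[0,1]$ into $T_0=\{t:C_t\subseteq C_0\}$ and $T_1=\{t:C_t\subseteq C_1\}$, with $0\in T_0$ and $1\in T_1$. The main obstacle is the closing step: showing that $T_0$ and $T_1$ are \emph{both} closed, which contradicts the connectedness of $[0,1]$. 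This is precisely where the upper semicontinuity of $x\mapsto f(x,y)$ is indispensable, together with the compactness of $C_0,C_1$ in $Y$. The delicate point is the boundary $\{f=\beta\}$: upper semicontinuity only yields non-strict inequalities, leaving a gap at those $t$ for which some $y\in C_0$ or $C_1$ has $f(x_t,y)=\beta$. I expect to close this gap by exploiting the freedom in choosing $\beta\in(\max_t\phi,m)$ and a compactness argument on $C_0,C_1$ (for instance, showing that $t\mapsto\min_{y\in C_1}f(x_t,y)$ stays strictly above $\beta$ on a neighbourhood of each point of $T_0$); this bookkeeping is the technical core of the theorem.

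Finally, the hypothesis only requires that \emph{one} of $X,Y$ be compact. If instead $X$ is compact, I would apply the case already proved to $\tilde f(y,x)=-f(x,y)$ on $Y\times X$: this $\tilde f$ is quasi-concave and upper semicontinuous in its first variable and quasi-convex and lower semicontinuous in its second, and the compact space is again the one of the inner infimum. The established equality $\sup_y\inf_x\tilde f=\inf_x\sup_y\tilde f$ then unwinds, via $\inf_x\tilde f(y,x)=-\sup_x f(x,y)$ and $\sup_y\tilde f(y,x)=-\inf_y f(x,y)$, to the desired identity $\inf_y\sup_x f=\sup_x\inf_y f$.
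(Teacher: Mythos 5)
Your overall architecture is the standard elementary proof of Sion's theorem (essentially Komiya's argument): the trivial inequality, the open cover $\{U_x\}$ of the compact space $Y$, the reduction to a finite minimax lemma over $\mathrm{conv}\{x_1,\dots,x_n\}$, and the two-point case via connected sublevel sets $C_t\subseteq C_0\cup C_1$ with $C_0\cap C_1=\emptyset$. Those steps check out, and the final reduction from ``$X$ compact'' to ``$Y$ compact'' via $\tilde f(y,x)=-f(x,y)$ is fine. (The paper itself gives no proof --- it cites Corollary 3.3 of Sion's paper --- so the only question is whether your argument is complete.) It is not: the step you yourself flag as the technical core, namely that $T_0=\{t:C_t\subseteq C_0\}$ and $T_1=\{t:C_t\subseteq C_1\}$ are both closed, is left unproven, and the route you sketch for closing it cannot work. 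The function $t\mapsto\min_{y\in C_1}f(x_t,y)$ is an infimum of upper semicontinuous functions of $t$, hence itself upper semicontinuous; upper semicontinuity propagates \emph{upper} bounds to neighbourhoods, never strict \emph{lower} bounds, so from $\min_{y\in C_1}f(x_t,y)>\beta$ you cannot conclude the same for $s$ near $t$. That conclusion would require lower semicontinuity in the $x$-variable, which is not among the hypotheses; so the anticipated ``bookkeeping'' is precisely the implication the hypotheses do not give.

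The correct mechanism is different and simpler: show that $T_0$ and $T_1$ are both \emph{open} (equivalent to both being closed, since they partition $[0,1]$, and equally contradictory with connectedness). Fix $t$ and pick a witness $y_t$ with $f(x_t,y_t)<\beta$; such $y_t$ exists because $\phi(t)\le\max_s\phi(s)<\beta$, and this strict room below $\beta$ is exactly the freedom you alluded to. By upper semicontinuity of $s\mapsto f(x_s,y_t)$ (composition of the continuous map $s\mapsto x_s$ with the u.s.c.\ map $x\mapsto f(x,y_t)$), we get $f(x_s,y_t)<\beta$, i.e.\ $y_t\in C_s$, for all $s$ in a neighbourhood of $t$. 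Since each such $C_s$ is connected and contained in the disjoint union of $C_0$ and $C_1$, it must lie entirely in whichever of $C_0,C_1$ contains $y_t$, which is the same one that contains $C_t$. Hence a whole neighbourhood of $t$ lies in the same $T_i$ as $t$; both sets are open, and no compactness of $C_0,C_1$ is needed at all. A second, smaller gap: the induction step of your finite lemma does not follow from the two-point case applied on $Y$ alone, because $\min_y\max_{i\le n-1}f(x_i,y)$ may be \emph{smaller} than $\min_y\max_{i\le n}f(x_i,y)$, so the induction hypothesis on $x_1,\dots,x_{n-1}$ over $Y$ gives too weak a bound. The standard repair is to apply the induction hypothesis with $Y$ replaced by the compact convex subset $Y'=\{y: f(x_n,y)\le\beta\}$, on which $\max_{i\le n-1}f(x_i,y)>\beta$, obtain $x'\in\mathrm{conv}\{x_1,\dots,x_{n-1}\}$ with $f(x',y)>\beta$ on $Y'$, and only then invoke the two-point lemma for the pair $x',x_n$ over all of $Y$. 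Both repairs are standard, but as written your proof is incomplete at its decisive step.
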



\begin{thebibliography}{10}

\bibitem{Rockafellar74}
R.~T. Rockafellar.
\newblock {\em Conjugate Duality and Optimization}, volume~16.
\newblock SIAM, 1974.

\bibitem{CLWZ12}
X.~Cui, D.~Li, S.~Wang, and S.~Zhu.
\newblock Better than dynamic mean-variance: time inconsistency and free cash
  flow stream.
\newblock {\em Mathematical Finance}, 22(2):346--378, 2012.

\bibitem{PedersenPeskir17}
J.~L. Pedersen and G.~Peskir.
\newblock Optimal mean-variance portfolio selection.
\newblock {\em Mathematics and Financial Economics}, 11(2):137--160, 2017.

\bibitem{PedersenPeskir16}
J.~L. Pedersen and G.~Peskir.
\newblock Optimal mean--variance selling strategies.
\newblock {\em Mathematics and Financial Economics}, 10(2):203--220, 2016.

\bibitem{LiNg00}
D.~Li and W.-L. Ng.
\newblock Optimal dynamic portfolio selection: multiperiod mean-variance
  formulation.
\newblock {\em Mathematical Finance}, 10(3):387--406, 2000.

\bibitem{Sion58}
M.~Sion.
\newblock On general minimax theorems.
\newblock {\em Pacific Journal of Mathematics}, 8(1):171--176, 1958.

\bibitem{PeskirShiryaev06}
G.~Peskir and A.~Shiryaev.
\newblock {\em Optimal Stopping and Free-Boundary Problems}.
\newblock Springer, 2006.

\bibitem{Markowitz52}
H.~Markowitz.
\newblock Portfolio selection.
\newblock {\em The Journal of Finance}, 7(1):77--91, 1952.

\bibitem{Sharpe66}
W.~F. Sharpe.
\newblock Mutual fund performance.
\newblock {\em The Journal of Business}, 39(1):119--138, 1966.

\bibitem{CogneauHubner09b}
P.~Cogneau and G.~H{\"u}bner.
\newblock The (more than) 100 ways to measure portfolio performance. part 2:
  special measures and comparison.
\newblock {\em Journal of Performance Measurement}, 14(Fall):56--69, 2009.

\bibitem{ChernyMadan08}
A.~Cherny and D.~Madan.
\newblock New measures for performance evaluation.
\newblock {\em The Review of Financial Studies}, 22(7):2571--2606, 2008.

\bibitem{FollmerSchied11}
H.~F{\"o}llmer and A.~Schied.
\newblock {\em Stochastic Finance: An Introduction in Discrete Time}.
\newblock Walter de Gruyter, 2011.

\bibitem{MSU18}
A.~Mafusalov, A.~Shapiro, and S.~Uryasev.
\newblock Estimation and asymptotics for buffered probability of exceedance.
\newblock {\em European Journal of Operational Research}, 2018.

\bibitem{Z16}
M.~Zhitlukhin.
\newblock A second-order monotone modification of the sharpe ratio.
\newblock In {\em Recent Advances in Financial Engineering 2014: Proceedings of
  the TMU Finance Workshop 2014}, pages 217--226. World Scientific, 2016.

\bibitem{RockafellarRoyset10}
R.~T. Rockafellar and J.~O. Royset.
\newblock On buffered failure probability in design and optimization of
  structures.
\newblock {\em Reliability Engineering \& System Safety}, 95(5):499--510, 2010.

\bibitem{Z17}
M.~Zhitlukhin.
\newblock On maximization of the expectation-to-deviation ratio of a random
  variable.
\newblock {\em Russian Mathematical Surveys}, 72(4):765, 2017.

\bibitem{CogneauHubner09a}
P.~Cogneau and G.~H{\"u}bner.
\newblock the (more than) 100 ways to measure portfolio performance. part 1:
  standardized risk-adjusted measures.
\newblock {\em Journal of Performance Measurement}, 13(Summer):56--71, 2009.

\bibitem{LeSourd07}
V.~Le~Sourd.
\newblock Performance measurement for traditional investment.
\newblock {\em Financial Analysts Journal}, 58(4):36--52, 2007.

\bibitem{DavisUryasev16}
J.~R. Davis and S.~Uryasev.
\newblock Analysis of tropical storm damage using buffered probability of
  exceedance.
\newblock {\em Natural Hazards}, 83(1):465--483, 2016.

\bibitem{White74}
D.~White.
\newblock Dynamic programming and probabilistic constraints.
\newblock {\em Operations Research}, 22(3):654--664, 1974.

\bibitem{ADEH99}
P.~Artzner, F.~Delbaen, J.-M. Eber, and D.~Heath.
\newblock Coherent measures of risk.
\newblock {\em Mathematical Finance}, 9(3):203--228, 1999.

\bibitem{RockafellarUryasev00}
R.~T. Rockafellar and S.~Uryasev.
\newblock Optimization of conditional value-at-risk.
\newblock {\em Journal of Risk}, 2:21--42, 2000.

\bibitem{Pliska97}
S.~Pliska.
\newblock {\em Introduction to Mathematical Finance}.
\newblock Blackwell Publishers Oxford, 1997.

\bibitem{Richardson89}
H.~R. Richardson.
\newblock A minimum variance result in continuous trading portfolio
  optimization.
\newblock {\em Management Science}, 35(9):1045--1055, 1989.

\bibitem{MafusalovUryasev14}
A.~Mafusalov and S.~Uryasev.
\newblock Buffered probability of exceedance: mathematical properties and
  optimization algorithms.
\newblock {\em Risk Management and Financial Engineering Lab, Department of
  Industrial and Systems Engineering, University of Florida, Research Report},
  1:2015--5, 2014.

\bibitem{Krokhmal07}
P.~A. Krokhmal.
\newblock Higher moment coherent risk measures.
\newblock {\em Quantitative Finance}, 7:373--387, 2007.

\bibitem{CheriditoLi09}
P.~Cheridito and T.~Li.
\newblock Risk measures on {o}rlicz hearts.
\newblock {\em Mathematical Finance}, 19(2):189--214, 2009.

\bibitem{Sharpe94}
W.~F. Sharpe.
\newblock The sharpe ratio.
\newblock {\em Journal of Portfolio Management}, 21(1):49--58, 1994.

\end{thebibliography}
\end{document}